\pdfoutput=1

\newif\ifdraft\draftfalse   
\newif\ifanon\anontrue      
\newif\ifcamera\camerafalse 
\newif\iflongrefs\longrefsfalse 
\newif\ifsooner\soonerfalse 
\newif\iflater\laterfalse   
\newif\iffull\fullfalse   
\newif\ifneedspace\needspacefalse 
\newif\ifhighlightnewtext\highlightnewtextfalse 
\makeatletter \@input{texdirectives.tex} \makeatother

\ifcamera
  \documentclass[acmsmall, screen]{acmart}
\else
  \ifanon
  \documentclass[acmsmall,review,anonymous,screen,nonacm]{acmart}
  \else
  \documentclass[acmsmall,review=false,screen=true,nonacm]{acmart}
  \fi
\fi

\AtBeginDocument{%
  }

\ifanon
\else
\setcopyright{cc}
\setcctype{by}
\acmDOI{10.1145/3828689}
\acmYear{2026}
\acmJournal{PACMPL}
\acmVolume{10}
\acmNumber{ICFP}
\acmArticle{291}
\acmMonth{8}
\acmSubmissionID{icfp26main-p64-p}
\received{2026-03-02}
\received[accepted]{2026-05-13}
\fi

\usepackage{booktabs}   
\usepackage{subcaption} 

\usepackage[normalem]{ulem}

\definecolor{dkblue}{rgb}{0,0.1,0.5}
\definecolor{dkgreen}{rgb}{0,0.4,0}
\definecolor{dkred}{rgb}{0.6,0,0}
\definecolor{dkpurple}{rgb}{0.7,0,1.0}
\definecolor{purple}{rgb}{0.9,0,1.0}
\definecolor{olive}{rgb}{0.4, 0.4, 0.0}
\definecolor{teal}{rgb}{0.0,0.4,0.4}
\definecolor{azure}{rgb}{0.0, 0.5, 1.0}
\definecolor{gray}{rgb}{0.5, 0.5, 0.5}
\definecolor{dkgray}{rgb}{0.3, 0.3, 0.3}
\definecolor{orange}{rgb}{1.0, 0.5, 0.0}
\definecolor{lhorange}{rgb}{1.0, 0.8, 0.6}

\newcommand{\comm}[3]{\ifdraft{{\color{#1}[#2: #3]}}\fi}
\newcommand{\tbd}[1]{\comm{orange}{TBD}{#1}}

\newcommand{\ch}[1]{\comm{teal}{CH}{#1}}
\newcommand{\ca}[1]{\comm{olive}{CA}{#1}}
\newcommand{\tw}[1]{\comm{purple}{TW}{#1}}
\newcommand{\da}[1]{\comm{brown}{DA}{#1}}
\newcommand{\er}[1]{\comm{blue}{ER}{#1}}
\newcommand{\ap}[1]{\comm{azure}{AP}{#1}}

\usepackage{xspace}

\newcommand*{\EG}{e.g.,\xspace}
\newcommand*{\IE}{i.e.,\xspace}

\newcommand\fstar{F$^{\star}$\xspace}
\newcommand\sciostar{SCIO$^{\star}$\xspace}
\newcommand\secrefstar{SecRef$^{\star}$\xspace}
\newcommand\seiostar{SEIO$^{\star}$\xspace}
\newcommand\iostar{\ls{IO}$^\star$\xspace}
\newcommand\lambdaio{$\lambda_{io}$\xspace}
\newcommand\lambdabox{$\lambda_{\square}$\xspace}

\newcommand{\cmparrow}{\hspace{-0.35em}\downarrow}
\newcommand{\cmp}[1]{#1\cmparrow}
\newcommand{\bakarrow}{\hspace{-0.35em}\uparrow}
\newcommand{\bak}[1]{#1\bakarrow}

\usepackage{listings}
\usepackage[framemethod=tikz]{mdframed}
\newcommand\maybecolor[1]{\color{#1}}
\newcommand\nomaybecolor[1]{}
\colorlet{codecol}{dkgreen}

\ifneedspace
\fi
\definecolor{light-gray}{gray}{0.95}
\newcommand{\ls}[1]{\lstinline{#1}}

\newcommand{\lss}[1]{\lstinline[basicstyle=\footnotesize]{#1}}

\usepackage{combelow} 
\usepackage{needspace}
\usepackage{tikz}
\usepackage{wrapfig}
\usepackage[inline]{enumitem}
\newlist{inlist}{enumerate*}{1}
\setlist[inlist]{label=(\arabic*)}

\usetikzlibrary{shapes,arrows}
\tikzset{%
  block/.style    = {draw, thick, rectangle, minimum height = 1.5em,
    minimum width = 3em},
}

\def\Snospace~{\S{}}

\usepackage[skip=0pt, belowskip=-10pt]{caption}

\newcommand{\newtext}[1]{\ifhighlightnewtext{\color{dkgreen}#1}\else#1\fi}
\newcommand{\newremove}[1]{\ifhighlightnewtext{\color{dkred}\sout{#1}}\fi}

\begin{document}


\title{Misquoted No More: Securely Extracting \fstar{} Programs with IO} 

\ifanon
\author{}
\else
\author{Cezar-Constantin Andrici}
  \authornote{First author.}
  \affiliation{\institution{MPI-SP}\city{Bochum}\country{Germany}}
  \email{cezar.andrici@mpi-sp.org}
  \orcid{0009-0002-7525-2440}
\author{Abigail Pribisova}
  \affiliation{\institution{MPI-SP and MPI-SWS}\city{Bochum}\country{Germany}}
  \email{abigail.pribisova@mpi-sp.org}
  \orcid{0000-0002-7462-9384}
\author{Danel Ahman}
  \affiliation{\institution{University of Tartu}\ifcamera\city{Tartu}\fi\country{Estonia}}
  \email{danel.ahman@ut.ee}
  \orcid{0000-0001-6595-2756}
\author{C\u{a}t\u{a}lin Hri\cb{t}cu}
  \affiliation{\institution{MPI-SP}\city{Bochum}\country{Germany}}
  \email{catalin.hritcu@mpi-sp.org}
  \orcid{0000-0001-8919-8081}
\author{Exequiel Rivas}
  \affiliation{\institution{Tallinn University of Technology}\ifcamera\city{Tallinn}\fi\country{Estonia}}
  \email{exequiel.rivas@ttu.ee}
  \orcid{0000-0002-2114-624X}
\author{Théo Winterhalter}
  \affiliation{\institution{Inria Saclay}\ifcamera\city{Saclay}\fi\country{France}}
  \email{theo.winterhalter@inria.fr}
  \orcid{0000-0002-9881-3696}
\fi

\ifanon\else
\renewcommand{\shortauthors}{C-C. Andrici, A. Pribisova, D. Ahman, C. Hri\cb{t}cu, E. Rivas, and T. Winterhalter}
\fi

\begin{abstract}
Shallow embeddings that use monads to represent effects are popular in
proof-oriented languages because they are convenient for formal verification.
Once shallowly embedded programs are verified, they are often extracted to
mainstream languages like OCaml or C and linked into larger codebases.
The extraction process is not fully verified because it often
involves quotation---turning the shallowly embedded program into a deeply
embedded one---and verifying quotation remains a major open challenge.
Instead, some prior work obtains formal correctness guarantees using
translation validation to certify individual extraction results.
We build on this idea, but limit the use of translation validation to a first
extraction step that we call relational quotation and that uses a metaprogram to
construct a typing derivation for the given shallowly embedded program.
This metaprogram is simple, since the typing derivation follows the structure of
the original program.
Once we validate
that the typing derivation is valid for the
original program, we pass it to a verified syntax-generation function that
produces code guaranteed to be semantically related to the original program.

We apply this general idea to build \seiostar{}, a framework for extracting
shallowly embedded \fstar{} programs with IO \newtext{and refinement types}
to a deeply embedded \newtext{simply typed}
\(\lambda\)-calculus while providing formal secure compilation guarantees.
Using two cross-language logical relations, we devise a machine-checked proof
in \fstar{} that \seiostar{} guarantees Robust Relational Hyperproperty
Preservation (RrHP), a very strong secure compilation criterion that implies
full abstraction as well as preservation of trace properties and hyperproperties
against arbitrary linked adversarial code.
This goes beyond the state of the art in verified and certifying extraction,
which so far has focused on correctness rather than security.
\end{abstract}

\ifanon\else
\begin{CCSXML}
  <ccs2012>
     <concept>
         <concept_id>10011007.10010940.10010992.10010998.10010999</concept_id>
         <concept_desc>Software and its engineering~Software verification</concept_desc>
         <concept_significance>500</concept_significance>
         </concept>
     <concept>
         <concept_id>10011007.10011006.10011041</concept_id>
         <concept_desc>Software and its engineering~Compilers</concept_desc>
         <concept_significance>500</concept_significance>
         </concept>
   </ccs2012>
\end{CCSXML}
  
\ccsdesc[500]{Software and its engineering~Software verification}
\ccsdesc[500]{Software and its engineering~Compilers}

\keywords{secure compilation, formal verification, proof assistants, input-output,
  shallow embeddings, monads, metaprogramming, quotation, extraction,
  translation validation, logical relations}
\fi



\maketitle

\section{Introduction}

The \fstar{} proof-oriented language~\cite{mumon} is commonly employed
as a programming language, wherein programs are shallowly embedded
using monads to represent effects~\cite{preorders,dm4all,dm4free,indexedeffects}.
Working with a shallow embedding eases formal verification, after which such programs are 
extracted to a mainstream language like OCaml~\cite{Letouzey04:thesis} or C~\cite{lowstar}.
One success story of this approach is
EverCrypt~\cite{evercrypt}: a cryptographic provider written and verified in \fstar{} and then
extracted to \ls{C} code that was integrated into Mozilla Firefox, the Linux kernel,
and WireGuard~\cite{everest-toplas26}.
\ifsooner
\ca{do I have to give a definition for "Extraction"? a specific
  translation for proof-oriented languages in which one removes dependent types
  and specs, and refinements, etc...}\ch{Do we support any of these things?}
\fi

Extraction of \fstar{} programs is, however, unverified
and part of the trusted computing base.
Moreover, in the case that the extracted code gets linked into a
bigger unverified codebase, then the unverified code is also trusted.
This is problematic because the unverified code could inadvertently or
maliciously (if compromised by an attacker) break the internal invariants of the verified code,
leading to unsoundness and insecurity.
In other words,
verified \fstar{} programs lose their formal guarantees when
extracted and linked with unverified code.

To address this, we aim to provide {\em formal security guarantees} for
extracted \fstar{} code even when linked with arbitrary unverified code.
We are not aware of any proof-oriented language whose extraction provides
such strong secure compilation guarantees~\cite{MarcosSurvey, AbateBGHPT19}.
Most existing work focuses on verifying {\em correctness} of
extraction~\cite{coq-to-ocaml, certicoq-wasm, certicoq17, concert-rocq,
  verified_dafny_extraction, Isabelle2CakeML}, not security, and, moreover,
extraction is not fully verified.
%
%
To fully verify extraction of shallowly-embedded programs,
one would have to define the programs, define extraction,
and connect the two, {\em all within the same language}.
This represents a significant challenge:
the programs are {\em shallowly embedded},
whereas the extraction function
operates on a {\em deep embedding}~\cite{coq-to-ocaml}.
Connecting the two requires a metaprogram, and the most direct way of
obtaining formal guarantees would require verifying this metaprogram.
This is challenging, for a start because verifying a metaprogram requires a mechanized
meta-theory for the proof-oriented language, but there has been steady progress
towards this goal for many such languages~\cite{DBLP:journals/corr/abs-2403-14064, DBLP:conf/aplas/LiesnikovC24,
  Isabelle2CakeML, verified_dafny_extraction, 10.1145/3632902, metacoq2}.
%
Another challenge though is that the standard primitive connecting a shallow and a deep embedding is
{\em quotation}---a metaprogram that translates a shallowly embedded program
into its deep embedding.
As far as we know, quotation has not been verified for any proof-oriented language,
and even for Rocq, which has a mature mechanized meta-theory~\cite{metacoq2},
verifying quotation remains a big unsolved challenge.\footnote{
   Verification of quotation was started in MetaRocq~\cite{metacoq2}, but (as of 19 February 2026)
   it has not progressed much in the last 3 years:
   \url{https://github.com/MetaRocq/metarocq/tree/9.1/quotation/theories}}
  

Instead of verifying extraction,
other works {\em certify} each extraction result using translation validation
to obtain {\em formal correctness guarantees}~\cite{rupicola, cakeml-pure-synthesis,
  cakeml-effectful-synthesis, oeuf, certextraction-coq}.
A metaprogram is used on a shallowly embedded program to generate both the extracted code
and a proof that the two are related.
This is an instance of translation validation because to ensure trust
each such proof has to be checked by the proof-oriented language (\EG type-checked),
which can also fail because the metaprogram is not verified
and the proof can be wrong too or fail to verify
(\EG fail during unification or SMT solving),
thus, ideally, one would limit the use of translation validation.
%

In this paper, we introduce \seiostar{}, a formally secure
extraction framework for \fstar{}, written itself in \fstar{}.
\seiostar{} advances the state of the art on certifying extraction~\cite{rupicola, cakeml-pure-synthesis,
  cakeml-effectful-synthesis, oeuf, certextraction-coq}
in two key ways:
\begin{inlist}
  \item it {\em minimizes the reliance on translation validation}, and
  \item it provides {\em formal secure compilation guarantees}
    against arbitrary adversarial contexts.
\end{inlist}
Specifically, \seiostar{} extracts \iostar{}---a shallow embedding of
\newremove{simply typed} \newtext{monadic} programs
\newtext{with refinement types and} capable of file-based Input-Output (IO)---to \lambdaio{},
a deep embedding of a terminating, \newtext{simply typed}, call-by-value
\(\lambda\)-calculus with IO primitives (\autoref{fig:overview}).
\newtext{Extending \seiostar{} beyond this small subset of \fstar{}---\EG to
  support \fstar{}'s recursive functions and full dependent types---is left as
  future work, as discussed in \autoref{sec:future}.}

%


\begin{figure}[t]
  \begin{tikzpicture}[auto]
  \draw
      node [block, rounded corners] (vps) {\( \small \begin{array}{cc}\textit{\ verified program}\end{array} \)}
      node [block, rounded corners, node distance=1cm, below of=vps] (tyd) {\( \small \begin{array}{cc}\textit{typing derivation}\end{array} \)}
      node [block, right of =vps, node distance=5cm, yshift=-0.5cm, minimum height = 3em, minimum width = 5em, rounded corners,align=center] (vpi) {\textit{\small extracted}\\\textit{\small program}}
      node [block, right of =vpi, node distance=3.8cm, minimum height = 3em, minimum width = 5em,, rounded corners,align=center] (upt) {\textit{\small unverified}\\\textit{\small code}}
      ;
  \draw
      ;
  \draw[-] [thin] (1.75, 0.75) -- (5.45, 0.75);
  \draw[-] [thin] (7.60, 0.75) -- (9.75, 0.75);
  \draw node at (6.5, 0.75) {\tiny target language (\lambdaio{})};
  \draw node at (0.0, 0.75) {\tiny source language (IO$^\star$)};
  \coordinate (Qf) at ([xshift=-2cm]vpi.west);
  \draw[-](vps.east) -- node [above] { } node [below,align=center] { } (Qf.west)
        (tyd.east) -- node [above, yshift=-0.12cm, xshift=-0.5cm] { } node [below,align=center] {  } (Qf.west);
  \draw[->, align=center](Qf) -- node [above] { {\small {\bf syntax}} \\ {\small {\bf generation}} } node [below,align=center] {  } (vpi);
  \draw[to reversed-to reversed](vpi) -- node [above] { \small {\bf link} } node [below,align=center] {  } (upt);
  \draw[-] [color=gray,thick, dotted] (1.60,-1.50) -- (1.60,0.75);
  \draw[->, dashed,align=center] (vps.west) to[bend right=90] node [left] {\bf{\small relational}\\\bf{\small quotation}} node [left,align=center] {  } (tyd.west);
  \end{tikzpicture}
  \caption{\label{fig:overview}An overview of extraction using \seiostar{}.
  Solid lines are verified pure \fstar{} functions.
  The dashed arrow represents the relational quotation metaprogram, the result of which is checked by \fstar{}
  to be a typing derivation for the original verified program.\ch{Can you also show the logical relation
    on this diagram?}}
\end{figure}





\needspace{5\baselineskip}
\textbf{Contributions:}
\begin{itemize}[leftmargin=13pt,nosep,label=$\blacktriangleright$]
  \item 
    We introduce {\em relational quotation} (\autoref{sec:rel-quote}), a translation validation technique for quoting
    shallowly embedded programs in a dependently typed language.
    The idea is to define a typing relation
    that precisely characterizes the shallowly embedded source language~\newtext{\cite{deeper-shallow-embeddings}},
    and to use a metaprogram to construct a typing derivation for a given program
    by following its structure.
    Checking that a derivation is indeed a valid typing derivation for the
    original program actually happens by type checking the derivation.
    This works because the typing relation uses dependent types to guarantee
    that a derivation corresponds to a specific shallowly embedded program.
  \item
    \newtext{Relational quotation lets us structure certifying extraction in a new
    way that enables {\em minimizing} the reliance on translation validation,
    confining it to the first of two steps, and {\em verifying} the second once and
    for all~(\autoref{fig:overview}).}
    From a valid typing derivation, the second step generates the corresponding target syntax, which
    we can semantically relate back to the original program because the
    dependent type of the derivation ties it to that program.
  \item We demonstrate this general idea by extracting \iostar{} programs with \seiostar{}.
     For this, we \newtext{first} show how to relate a \newtext{simply typed}, shallowly embedded \iostar{} program
     with a deeply embedded \lambdaio{} one using
     two cross-language logical relations~\cite{NeisHKMDV15, Ahmed15}:
     one that semantically relates the extracted \lambdaio{} program to
     the original \iostar{} source program, and one that does the opposite (\autoref{sec:relating}).

  \item
    We use the logical relations to provide a machine-checked proof
    that \seiostar{} satisfies Robust Relational Hyperproperty Preservation (RrHP)
    \newtext{for simply typed source programs (\autoref{sec:secure-compilation})}.
    This is the strongest secure compilation criterion of \citet{AbateBGHPT19},
    which not only implies full abstraction (\IE preserving observational equivalence),
    but also the preservation of trace properties 
    and hyperproperties (like noninterference) against adversarial contexts.
    Since RrHP is transitive, we expect that in the future \seiostar can be
    vertically composed with previous secure compilation steps for
    \fstar{} that were proved to satisfy RrHP, but only down
    to some shallowly embedded languages~\cite{sciostar,  secrefstar} (\autoref{sec:future}).

  \item
    \newtext{Finally, we extend \seiostar{} to support {\em refinement types}
      (\autoref{sec:refinements}), which attach logical constraints to types.
    This extension involves minimal changes to the typing relation,
    the verified syntax-generation step, and the secure compilation proof.
    Extending the relational quotation metaprogram is, however, challenging
    because of the way \fstar{} type-checks refinement types:
    their logical constraints are transparently collected in a single
    weakest precondition, which \fstar{} discharges using SMT,
    without producing a proof term.
    To work around this, our quotation has to recompute the precondition from scratch during
    quotation and, to get the formal secure compilation guarantees, reprove it.}

\end{itemize}

\textbf{Outline.}
\autoref{sec:key-ideas} starts by illustrating the key ideas of \seiostar{}
on a simple running example.
\autoref{sec:rel-quote} details how relational quotation works for \iostar{}.
\autoref{sec:relating} formalizes the \iostar{} and \lambdaio{} languages, and the two logical relations between them.
\autoref{sec:secure-compilation} shows how all the pieces of \seiostar{} come
together and how we use the logical relations to prove that it satisfies RrHP.
\newtext{\autoref{sec:refinements} shows how to add support for refinements.}
\autoref{sec:seio-in-action} provides more details on our running example,
and discusses how we further compile
\lambdaio{} in practice using the Peregrine framework~\cite{Peregrine}.
Finally, we compare to related work in \autoref{sec:related}, and
conclude and discuss future work in \autoref{sec:future}.

\textbf{Artifact.}
This \ifanon submission \else paper \fi comes with an artifact in \fstar{} that
contains \seiostar{} and a complete machine-checked proof that it satisfies RrHP.
The artifact was developed in two phases.
In the first phase, covering the core formalization of \seiostar{}
(\autoref{sec:key-ideas}--\autoref{sec:secure-compilation}),
the proofs of some of the low-level lemmas were completed
by Claude Opus 4.6+, which is extremely good at doing \fstar{} proofs with
minimal human guidance~\cite{VibeFStar}.
The artifact includes over 900 top-level definitions (including type and function
definitions, lemmas, and test programs) and for around 80 of them the actual
definition was produced by Claude, usually lemmas for which we had written the
statement and Claude produced the complete proof including by adding additional lemmas---\EG
proving the running example correct~(\autoref{sec:ki-secure_extraction}).
The largest such proofs are over 200 lines long, look natural to us,
and would have taken us much longer to write.
In the second phase, the artifact was extended to support refinement types
(\autoref{sec:refinements})
and Claude was used to
update existing lemmas, mainly the compatibility lemmas.
Claude also helped on writing the relational quotation metaprogram~(\autoref{sec:metaprogram}).
\ifanon\else
The artifact is available on Zenodo~\cite{andrici_2026_20534723} and at: \url{https://github.com/andricicezar/fstar-io/tree/icfp26/seiostar}
\fi

\section{Key Ideas}
\label{sec:key-ideas}

We present the main ideas of \seiostar{} using the simple running example
of a verified wrapper for unverified AI agents.
We first give an overview of \seiostar{} (\autoref{sec:ki-overview}),
then we present relational quotation (\autoref{sec:ki-relq}), and
finally we present what guarantees \seiostar{} provides when extracting
the wrapper (\autoref{sec:ki-secure_extraction}).

\begin{figure}[h]
\begin{lstlisting}[numbers=left,escapechar=\$,xleftmargin=13pt,framexleftmargin=10pt,numberstyle=\color{dkgray}]
let validate : string -> task_t -> string -> bool = ...

let wrapper (f: string) (task: task_t) (agent: string->task_t->io unit) : io (either unit err) =$\label{line:rex:wrapper_def_start}$
  let! contents = read_file f in $\label{line:rex:save_contents}$
  agent f task;! $\label{line:rex:call_agent}$
  let! new_contents = read_file f in $\label{line:rex:reload_contents}$
  if validate contents task new_contents then io_return (Inl ()) else io_return (Inr Failed_validation)$\label{line:rex:validate}$$\label{line:rex:wrapper_def_end}$
\end{lstlisting}
\caption{Running example: a validation wrapper for unverified agents.
}\label{fig:rex}
\end{figure}

\subsection{Overview of \seiostar{}}\label{sec:ki-overview}

In \autoref{fig:rex} (lines \autoref{line:rex:wrapper_def_start}-\autoref{line:rex:wrapper_def_end})
we implement 
a validation \ls{wrapper} for unverified AI agents.\footnote{
\fstar syntax\ifanon\else~\cite{sciostar}\fi{} is similar to OCaml (\lss{val}, \lss{let}, \lss{match}, etc).
Binding occurrences \lss{b} take the form \lss{x:t}
or \lss{#x:t} for an implicit argument.
%
Lambda abstractions are written
\lss{fun b_1 ... b}$_n$\lss{ -> t},
whereas
\lss{b_1 -> ... -> b}$_n$\lss{ -> C} denotes a curried function
type with result \lss{C}.
%
Contiguous binders of the same type may be written
\lss{(v_1 ... v}$_n$\lss{ : t)}.
%
%
Dependent pairs are written as \lss{x:t_1 & t2}.
%
%
%
For non-dependent function types, we omit the name in the argument binding, \EG type
\lss{#a:Type -> (#m #n : nat) -> vec a m -> vec a n -> vec a (m+n)}
represents the type of the
append function on vectors,
where both unnamed explicit arguments and the return type depend on the
implicit arguments. 
%
%
%
%
\lss{Type0} is the lowest universe of \fstar{}; we also use it to write propositions,
including \lss{True} (True) and \lss{False} (False).
%
We generally omit universe annotations.
The type \lss{either t_1$~$t_2} has two constructors, \lss{Inl:(#t_1 #t_2:Type) -> t_1 -> either t_1$~$t_2}
and \lss{Inr:(#t_1 #t_2:Type) -> t_2 -> either t_1$~$t_2}.
Expression \lss{Inr? x} tests whether \lss{x} is of the shape \lss{Inr y}.
Binary functions can be made infix by using backticks:
\lss{x `op`$~$y} stands for \lss{op x y}.
Finally, when writing explicitly monadic code one writes \lss{let! x = e_1 in e_2} for bind.
}
Such a wrapper is useful 
for tasks where it is easier to check the validity of the result than doing the task itself.
An agent is supposed to perform such a task on a file \ls{f},
and the wrapper validates that the agent performed the task correctly.
The wrapper saves the initial contents of the file (line~\autoref{line:rex:save_contents}),
then calls the agent (line~\autoref{line:rex:call_agent}), and
finally checks that the task was performed correctly (lines~\autoref{line:rex:reload_contents}-\autoref{line:rex:validate})
using a function \ls{validate}.
If calling the wrapper is successful (denoted by returning \ls{Inl ()}),
then the new contents of the file correspond to
correctly performing the task on the original file.

\tbd{Later we instantiate and implement this with an interesting case-study.}

The wrapper is written in the \ls{io} monad, which includes operations
such as \ls{openfile}, \ls{read}, \ls{write}, \ls{close}, etc.
Each of these operations can fail.
The function \ls{read_file} used in the \autoref{fig:rex}
just opens a file, reads its contents, and closes it back.
At this level the agent is modeled as a computation in the \ls{io} monad,
but it will actually be implemented in \lambdaio{}.
\ch{and it will send queries to a server running an LLM -- removed this for now}

The wrapper is shallowly embedded in \fstar{} and we want to define
extraction such that it produces a syntactic expression in \lambdaio{},
which is a deep embedding of a call-by-value (CBV) \(\lambda\)-calculus with IO primitives.

\begin{lstlisting}
let extracted_wrapper = 
  ELambda (ELambda (ELambda 
    eLet (eReadFile (EVar 2)) ( 
    eLet (eApp 2 (EVar 1) (EVar 3) (EVar 2)) ( 
    eLet (eReadFile (EVar 4)) (
    EIf (eApp 3 eValidate (EVar 2) (EVar 5) (EVar 0)) (EInl Ett) (EInr EFailedValidation))))))
\end{lstlisting}
\ca{add comments for better handhelding}The \ls{eLet}, \ls{eReadFile}, \ls{eApp x} and \ls{eValidate} are 
combinators that help with readability.

One can generate \ls{extracted_wrapper} starting from \ls{wrapper}
only by using a metaprogram.
Existing work on certifying extraction~\cite{rupicola, cakeml-pure-synthesis,
  cakeml-effectful-synthesis, oeuf, certextraction-coq} would use a metaprogram to generate
both \ls{extracted_wrapper} and a proof that the \ls{wrapper} and the
\ls{extracted_wrapper} are semantically related, proof that has to be type-checked.

The main insight that distinguishes \seiostar{} (\autoref{fig:overview}) from previous work is that
the metaprogram does not have to directly produce a deeply embedded program.
Instead, \seiostar{} uses a metaprogram to produce a typing derivation for the
original shallowly embedded \iostar{} program, which we call {\em relational quotation}.
Once we validate 
that the typing derivation is valid for the
original program extraction
continues with a pure \fstar{} function that generates \lambdaio{} syntax,
which we verify once and for all that it
produces code guaranteed to be semantically related to the original program.
This makes \seiostar{} rely less on translation validation
and more on verification compared with existing
work~\cite{rupicola,cakeml-pure-synthesis,cakeml-effectful-synthesis,oeuf,certextraction-coq},
and is further detailed in the next subsection.

\subsection{Relational Quotation}\label{sec:ki-relq}

{\em Relational quotation} is a technique to implement quotation for a subset of a dependently
typed language that uses translation validation to ensure correctness.
%
Relational quotation involves using dependent types to define a {\em typing relation}
that carves out the subset of the proof-oriented language
that represents the shallowly embedded source language.
\newtext{Arbitrary recursive \fstar{} functions are outside this subset, but
recursive functions over specific inductive types can still be handled
by adding explicit iterators.}
For example, for a simply typed pure language shallowly embedded in \fstar{} the
typing relation would have the following signature: 
%

\begin{lstlisting}
type typing_oval : tyG:typ_env -> a:Type -> (eval_env tyG -> a) -> Type
\end{lstlisting}

As we are dealing with a shallow embedding, this typing relation is {\em not}
defined over syntactic types and expressions.
%
Instead, \newtext{the typing relation follows how~\citet{deeper-shallow-embeddings} do
``deeper shallow embeddings''} and it relates
a typing environment (a partial map from variables to \fstar{} types, denoted by \ls{tyG}),
an {\em \fstar{} type} (\ls{a}),
and an {\em open \fstar{} value}, which is a function from an evaluation environment (a map from variables to \fstar{} values)
    to a value of type \ls{a}.

To understand why we need to work with open \fstar{} values,
consider we want to build a typing derivation for
the identity function on ints \ls{fun (x:int) -> x}. We need a way to refer to the body of
the function, and usually one would just have the variable \ls{x} as the body, but this is not possible
when working with a shallow embedding because we cannot have a free variable \ls{x} as an \fstar{} value.
This is why we ``open'' \fstar{} values by assuming that we have
an evaluation environment \ls{fsG} that contains the free variables.
We do this by representing open values as functions from evaluation environments to values (\ls{eval_env tyG -> a}),
where variables are de Bruijn indices,
the evaluation environment is guaranteed to contain a value for each entry in the typing environment (because of the type \ls{eval_env tyG}),
and the evaluation environment behaves like a stack, with the usual \ls{push}, \ls{hd} and \ls{tail} operations.
For example, for a typing environment \ls{tyG = extend int empty},
we can represent the body of the identity function as \ls{body = fun fsG -> hd fsG}.
Further, we represent the identity function as \ls{fun fsG x -> body (push fsG x)},
which, by unfolding the \ls{body}, simplifies to \ls{fun fsG x -> hd (push fsG x)},
which reduces to \ls{fun _ x -> x}, which is syntactically equal to the (thunked) identity function.
Open \fstar{} values enable us to define typing rules for free variables, which
is key to define a typing relation that supports lambda abstractions and characterizes
a shallowly embedded programming language.


Our typing rules resemble standard typing rules,
but instead of syntax expressions,
they operate on \fstar{} types and open \fstar{} values.
For example, the rule for introducing the literal false (\ls{Qfalse}), 
types \fstar{}'s \ls{false} as a \ls{bool}, where \ls{bool} is an \fstar{} type.

%
%

\begin{lstlisting}
type typing_oval : tyG:typ_env -> a:Type -> (eval_env tyG -> a) -> Type =
| Qfalse  : #tyG:typ_env ->
  typing_oval tyG bool (fun _  -> false) (* this is F*'s bool and false *)
| QAxiom   : #tyG:typ_env -> #a:Type ->
  typing_oval (extend a tyG) a (fun fsG -> hd fsG)
| QWeaken  : #tyG:typ_env -> #a:Type -> #b:Type -> #x:(eval_env tyG -> a) ->
  typing_oval tyG a x ->
  typing_oval (extend b tyG) a (fun fsG -> x (tail fsG))
| QApp : #tyG:typ_env -> #a:Type -> #b:Type -> #f:(eval_env tyG -> a -> b) -> #x:(eval_env tyG -> a) ->
  typing_oval tyG (a -> b) f -> 
  typing_oval tyG a x ->
  typing_oval tyG b (fun fsG -> (f fsG) (x fsG))
| QLambda : #tyG:typ_env -> #a:Type -> #b:Type -> #body:(eval_env (extend a tyG) -> b) ->
  typing_oval (extend a tyG) b body ->
  typing_oval tyG (a -> b) (fun fsG x -> body (push fsG x))
| QInl : #tyG:typ_env -> #a:Type -> #b:Type -> #v:(eval_env tyG -> a) ->
  typing_oval tyG v ->
  typing_oval tyG (either a b) (fun fsG -> Inl (v fsG))
... (* Qtrue, QIf, Qtt, QInr, QCase, etc., are similarly defined *)
\end{lstlisting}
The rule \ls{QAxiom} introduces the most recently bound variable (index 0)
by projecting the head of the evaluation environment.
\ls{QWeaken} weakens the environment by discarding the most recent binding
via \ls{tail} and recursing on the shorter environment.
The \ls{QApp} rule types \fstar{}'s function application by requiring that \ls{f} and
\ls{x} are well-typed in the same environment and producing the open value
\ls{fun fsG -> (f fsG) (x fsG)}, which applies the function to its argument
pointwise over the environment.
The \ls{QLambda} rule requires
typing the body in an environment extended with a fresh binding of type \ls{a}.
The resulting open value
\ls{fun fsG x -> body (push fsG x)} pushes the argument onto the
evaluation environment before evaluating the body.

For the simplicity of exposition, the typing rules above were given for
simply typed pure \fstar{} code. To also support typing of code that uses the
\ls{io} monad, we actually follow the typing rules of the fine-grained
call-by-value $\lambda$-calculus (FGCBV)~\cite{fgcbv}, which makes a clear
distinction between values and potentially effectful computations.
Intuitively, values are terms in introduction forms (which do not evaluate
further by themselves, such as $\lambda$–abstractions, Boolean values, and pairs
of values), and computations contain different kinds of elimination forms (such as
function application, conditional statements, and pattern matching), the monadic
operations of returning values and sequentially composing multiple computations,
and file operations.\da{
There is the subtlety of also allowing pure functions, but I am not sure it is
worth muddling the water here with trying to explain or emphasise complex values.}\ch{Okay
  since this starts with ``intuitively'', and these details can be explained later.}
The typing relations for values and computations are then defined by mutual
induction, and have the following signatures:
\needspace{3\baselineskip}
\begin{lstlisting}
type typing_oval : tyG:typ_env -> a:Type -> (eval_env tyG -> a) -> Type
type typing_oprod : tyG:typ_env -> a:Type -> (eval_env tyG -> io a) -> Type
\end{lstlisting}
%
%
We present the full definition of these typing relations in \autoref{fig:typing_relation} in \autoref{sec:typing_relation}.
We call the language carved out by these typing relations \iostar{}.
Next, we present a typing derivation for the \ls{wrapper} that proves that it is written in \iostar{}.\footnote{
  \ls{qLet}, \ls{qReadFile}, \ls{qValidate}, \ls{qApp x}
  and \ls{qVar x} are helper functions to simplify readability.
}
As \ls{wrapper} is a closed program, the typing judgement 
is applied to the \ls{empty} typing environment. 

\begin{lstlisting}
let wrapper_typing : typing_oval empty _ (fun _ -> wrapper) =
  QLambda (QLambda (QLambdaIO
    qLetIO (qReadFile (qVar 2)) (
    qLetIO (QAppIO (QApp (qVar 1) (qVar 3)) (qVar 2)) (
    qLetIO (qReadFile (qVar 4))
    QIfIO (qApp 3 qValidate (qVar 2) (qVar 5) QAxiom) 
        (QReturn (QInl Qtt))
        (QReturn (QInr QFailedValidation))))))
\end{lstlisting}
There are three important things to notice about this derivation.
First, the type of the derivation is \ls{typing_oval empty _ (fun _ -> wrapper)},
which directly relates the derivation with the original \ls{wrapper}.
Second, we do not have to spell out the implicit arguments
for typing environments, types and open \fstar{} values/computations
(remember that in \fstar{} definitions such as \ls{typing}
  such implicit arguments are preceded by \ls{#}).
Third, the structure of the derivation precisely follows the structure of \ls{wrapper}.


To type check such a derivation, we rely on \fstar{} inferring all the implicit
arguments, which given the structure of the typing relation \fstar{} should
always be able to do.
The intermediate typing environment and \fstar{} values in the derivation are
inferred with help from the top-level type we specify for the whole derivation
(\EG if \ls{QInl Qfalse} is typed as \ls{typing_oval empty (either bool unit) (fun _ -> Inl false)},
then it gets elaborated to \ls{QInl #empty #bool #unit (Qfalse #empty)}).
Since we work with simple types, it should always be possible to infer these implicits.
The more complex implicits for open \fstar{} values can be inferred
because of the structure of the typing rules that propagates them bottom-up.
The bottom-up propagation of open values 
{\em constructs} a program that at the end has to be {\em equal}
to the \ls{wrapper}, and only if that is the case does the derivation type-check.
\fstar{} can {\em unify} the two programs because
the derivation follows the syntactic structure of the program,
and because the passing of the evaluation environment around gets simplified away.
We already saw how the simplification happens
when we looked at the identity example: during unification, expression such as
\ls{tail (push x fsG)} and \ls{hd (push x fsG)} are simplified to \ls{fsG} and \ls{x}.
The evaluation environment can always be simplified away when one tries to type 
a (closed) \fstar{} value because the empty environment is used.



Relational quotation is a form of translation validation because 
the derivation has to be type-checked and we cannot formally guarantee
that the derivations generated by the metaprogram will type-check
(without verifying both the metaprogram, which
  includes the standard \fstar{} quotation, 
 and the \fstar{} type-checker itself).
This doesn't increase the TCB, since the \fstar{} type-checker is already trusted
for verifying the original shallowly embedded programs.
\tbd{We test relational quotation (extensively) on many unit tests, and on the running example.}

The metaprogram necessary to produce a typing derivation is rather simple (\autoref{sec:metaprogram}).
It takes an \fstar{} program (\EG \ls{wrapper}), quotes it using standard \fstar{} quotation,
and produces a typing
derivation (\EG \ls{wrapper_typing}) by following the structure of the program
and applying the corresponding typing rules.
%
The fact that we did not had to spell out the implicits for the manual derivation of the wrapper above
is also true for the metaprogram, since the implicits get instantiated automatically
when type-checking the derivation.
Just by type-checking, the derivation is checked to be a valid typing derivation
for the original program, which is a {\em syntactic} match.

It is great that relational quotation needs a simple metaprogram
because this metaprogram is not verified.
Previous work~\cite{rupicola, cakeml-pure-synthesis,
  cakeml-effectful-synthesis, oeuf, certextraction-coq} on certifying extraction 
produces both a deep embedding and a proof.
Moreover, at first sight, \ls{wrapper_typing} looks similar to \ls{extracted_wrapper},
but we deal with two different languages: \iostar{} is a FGCBV language,
while \lambdaio{} is a CBV language.
Previous work deals with the
complexity of {\em semantically} relating two different language semantics during translation
validation, while in \seiostar{} this only happens in the verified syntax generation step
(\autoref{sec:secure-compilation}).

\subsection{Formally Secure Extraction from \iostar{} to \lambdaio{}}\label{sec:ki-secure_extraction}

From the typing derivation, it is very easy to produce \ls{extracted_wrapper}.
\seiostar{} has a pure \fstar{} function that goes recursively over the derivation and
produces the expected \lambdaio{} program.
Each constructor of the typing relation corresponds directly to a
syntactic form in \lambdaio{}: \EG \ls{QBind} maps to a let-binding,
\ls{QOpenfile} to a call to the \ls{openfile} primitive, and \ls{QLambda} / \ls{QLambdaIO}
to lambda abstractions.

To relate \iostar{} programs with \lambdaio{} programs,
we give trace-producing semantics to \iostar{} and \lambdaio{},
meaning that for each IO operation, we associate an event.
The events include both the argument and the result of an operation.
The events get appended in order to build a local trace.
A possible successful local trace\ca{Since one can have multiple possible traces that can happen.}
for the wrapper is the following, where \ls{fd} and \ls{fd'} are universally
quantified, but guaranteed to be fresh.
\begin{lstlisting}
[EvOpenfile f (Inl fd); EvRead fd (Inl contents); EvClose fd;         (** first read_file **)
  ... agent's trace ...; 
 EvOpenfile f (Inl fd'); EvRead fd' (Inl new_contents); EvClose fd']  (** second read_file **)
\end{lstlisting}
%

We use the trace-producing semantics of \iostar{} to verify the wrapper,
since it allows us to write postconditions over a final result and a local trace of events.\ch{This
  goes beyond what an {\em operational} trace-producing semantics usually allows directly.
  What kind of semantics is this? Predicate transformer based?}
We verify the wrapper to 
satisfy a safety property in arbitrary contexts:
which states that for any file, task and agent,
if the wrapper runs successfully (the \ls{Inl} case), looking at the trace of events,
the contents last read from the file corresponds (according to the \ls{validate} function)
to performing the task on the contents first read from the file.
\begin{lstlisting}
let wrapper_correct f task agent = {|True|} wrapper f task agent {| fun r lt -> $\label{line:rex:wrapper_prop_start}$
      Inl? r ==> validate (fst_read_from f lt) task (last_read_from f lt) |} $\label{line:rex:wrapper_prop_end}$
\end{lstlisting}

Since \seiostar{} guarantees that whenever extraction succeeds it preserves safety against adversarial contexts
we can get that the \ls{extracted_wrapper} also satisfies this safety property
when it is linked with arbitrary unverified agents.

\ca{\bf explain again here that the criterion is end-to-end because of the typing derivation}

\ca{Two security properties: integrity and valid solution}
\ca{Attacks: modify other files, modify current file badly, modify current file, but bad solution}

\begin{figure}[h]
\begin{lstlisting}
type typing_oval : tyG:typ_env -> a:Type -> (eval_env tyG -> a) -> Type =
| QCase     : #tyG:typ_env -> #a:Type -> #b:Type -> #c:Type ->
  #sc:(eval_env tyG->either a b) -> #il:(eval_env (extend a tyG)->c) -> #ir:(eval_env (extend b tyG)->c) ->
  typing_oval tyG (either a b) sc -> typing_oval (extend a tyG) c il -> typing_oval (extend b tyG) c ir ->
  typing_oval tyG c (fun fsG -> match sc fsG with | Inl x -> il (push fsG x) | Inr y -> ir (push fsG y))
| QNRec     : #tyG:typ_env -> #a:Type ->
  #n:(eval_env tyG->nat) -> #z:(eval_env tyG->a) -> #s:(eval_env tyG->a->a) ->
  typing_oval tyG nat n -> typing_oval tyG a z -> typing_oval tyG (a->a) s ->
  typing_oval tyG a (fun fsG -> nrec #a (n fsG) (z fsG) (s fsG))
| QLambdaIO : #tyG:typ_env -> #a:Type -> #b:Type -> #body:(eval_env (extend a tyG) -> io b) ->
  typing_oprod (extend a tyG) b body ->
  typing_oval tyG (a -> io b) (fun fsG x -> body (push fsG x))
(* Qfalse, QAxiom, QWeaken, QApp, QLambda were presented in 2.2 *)
(* Qtt, Qtrue, QStringLit, QIf, QMkpair, QFst, QSnd, QInl, QInr, QZero and QSucc are similarly defined *)

and typing_oprod : tyG:typ_env -> a:Type -> (eval_env tyG -> io a) -> Type =
| QOpenfile : #tyG:typ_env -> #fnm:(eval_env tyG -> bool) ->
  typing_oval tyG string fnm ->
  typing_oprod tyG (either file_descr err) (fun _ -> io_openfile fnm)
(* QRead, QWrite, QClose are similar *)
| QReturn : #tyG:typ_env -> a:Type -> #x:(eval_env tyG -> a) ->
  typing_oval tyG a x ->
  typing_oprod tyG a (fun _ -> io_return x)
| QBind : #tyG:typ_env->#a:Type->#b:Type -> #m:(eval_env tyG->io a) -> #k:(eval_env (extend a tyG)->io b) ->
  typing_oprod tyG a m -> typing_oprod (extend a tyG) b k ->
  typing_oprod tyG b (fun fsG -> io_bind (m fsG) (fun x -> k (push fsG x)))
| QAppIO    : #tyG:typ_env -> #a:Type -> #b:Type -> #f:(eval_env tyG -> a -> io b) -> #x:(eval_env tyG -> a) ->
  typing_oval tyG (a -> io b) f -> typing_oval tyG a x ->
  typing_oprod tyG b (fun fsG -> (f fsG) (x fsG))
| QCaseIO : #tyG:typ_env -> #a:Type -> #b:Type -> #c:Type ->
  #sc:(eval_env tyG->either a b) -> #il:(eval_env (extend a tyG)->io c) -> #ic:(eval_env (extend b tyG)->io c) ->
  typing_oval tyG (either a b) sc -> typing_oprod (extend a tyG) c il -> typing_oprod (extend b tyG) c ic ->
  typing_oprod tyG c (fun fsG -> match sc fsG with | Inl x -> il (push fsG x) | Inr y -> ic (push fsG y))
(* QIfIO is similar to QCaseIO *)

\end{lstlisting}
\caption{The typing relation for \iostar{}.}\label{fig:typing_relation}
\end{figure}

\section{Relational Quotation of \iostar{}}
\label{sec:rel-quote}

This section details how relational quotation works for \iostar{}.
We first present the typing relation that carves out the \iostar{} subset of \fstar{} (\autoref{sec:typing_relation}),
and then describe the metaprogram that generates typing derivations (\autoref{sec:metaprogram}).

\subsection{Typing Relation}\label{sec:typing_relation}
The typing relation operates over a fixed set of simple \fstar{} types.
These include base types (\ls{unit}, \ls{bool}, \ls{string}, \ls{file_descr}, and \ls{nat}),
pure function types (\ls{a -> b}),
effectful function types (\ls{a -> io b}),
product types (\ls{a & b}), and
sum types (\ls{either a b}).
The full typing relation is given in \autoref{fig:typing_relation}.
Following FGCBV~\cite{fgcbv}, the relation is split into two mutually defined
judgements: \ls{typing_oval} for values and \ls{typing_oprod} for \ls{io} computations.

The typing relation is defined on the interface of the \ls{io} monad,
which provides \ls{io_return} and \ls{io_bind} with their standard types,
and a few primitives for file operations:
\ls{io_openfile} (open a file by name),
\ls{io_read} / \ls{io_write} (read from / write to a file descriptor), and
\ls{io_close} (close a file descriptor).
Each operation can fail, so \EG \ls{io_openfile} returns \ls{io (either file_descr err)}.

\paragraph{Values (\ls{typing_oval}).}
The value typing rules from \autoref{sec:ki-relq} (\ls{Qfalse}, \ls{QAxiom}, \ls{QWeaken},
\ls{QApp}, \ls{QLambda}) are complemented with analogous rules for other
introduction forms:
constants (\ls{Qtrue}, \ls{Qtt}, \ls{QStringLit}),
constructors (\ls{QInl}, \ls{QInr}, \ls{QMkpair}),
and projections (\ls{QFst}, \ls{QSnd}).
Two rules deserve special attention.
First, \ls{QCase} allows pattern matching on \ls{either a b} when both branches
produce a {\em value}: the scrutinee is typed as a value, and
each branch is typed in the environment extended with the matched component.
\ls{QIf} is similar for Booleans\newtext{, and \ls{QNRec} is the usual iterator for natural numbers}.
In the FGCBV terminology of \citet{fgcbv}, these are {\em complex values}:
elimination forms whose result is a value rather than a computation.
Second, \ls{QLambdaIO} is the rule for effectful \(\lambda\)-abstractions:
it types a value of type \ls{a -> io b} by requiring the body to be a
well-typed \ls{io} computation.

\paragraph{Computations (\ls{typing_oprod}).}
The rules for typing \ls{io} computations include:
\ls{QReturn} for monadic return,
\ls{QBind} for sequencing (where the continuation is typed in the
environment extended with the intermediate result),
and IO operations (\ls{QOpenfile}, \ls{QRead}, \ls{QWrite}, \ls{QClose}),
each of which requires its argument to be a well-typed value.
The remaining rules handle elimination forms whose result is a computation:
\ls{QAppIO} for applying an effectful function (\ls{a -> io b}),
and \ls{QCaseIO} (resp.\ \ls{QIfIO}) for pattern matching (resp.\ conditionals)
where the branches are computations.
Note that case analysis and conditionals appear in both 
relations---\ls{QCase}/\ls{QIf} for values and \ls{QCaseIO}/\ls{QIfIO} for 
computations---because 
the same elimination can yield either kind of term depending on the context.

\paragraph{Expressivity of \iostar{}.}
The language carved out by these typing relations is a simply typed
FGCBV language with IO primitives.
It supports higher-order functions (both pure and effectful),
pairs, sums, Booleans, strings, file descriptors\newtext{, and natural numbers}.
All \iostar{} programs are terminating.
\newtext{Recursive \fstar{} functions are not currently quotable.
However, useful recursion patterns can still be recovered by extending
the quoted-type universe and the typing relation with specific
inductive types and their iterators; we include natural
numbers with constructors zero, successor, and the usual iterator,
which suffices to express functions such as factorial that are
typically written in a recursive form.}
Supporting refinement types, dependent types, recursive types, and general
recursion is otherwise left as future work (\autoref{sec:future}).

\subsection{Generating Derivations Using a Metaprogram}\label{sec:metaprogram}
We write our relational quotation metaprogram to produce derivations directly in \fstar{},
which supports metaprogramming by providing a primitive effect and a set of
primitives that allow metaprograms to interact with \fstar{}'s type-checker via
an API called Meta-\fstar{}~\cite{metafstar}.
In this section, we present
how the key ideas about the metaprogram~(\autoref{sec:ki-relq})
are realized.

We define a metaprogram \ls{generate_derivation}
that takes the quotation of an \fstar{} program,
and tries to create a derivation for it.
If the program is not in the \iostar{} language,
then the metaprogram will fail.
To produce the quotation of the wrapper~(\autoref{fig:rex}), we use the
backtick operator, which is a special \fstar{} quotation primitive
for top-level definitions (\ls{`wrapper}).
To call the instantiated metaprogram inside \fstar{}, we use the
special \ls{splice_t} primitive,
which expects the metaprogram to return a list of \fstar{} terms representing
new top-level let bindings. It requires that the new bindings are well typed;
and in this case there is just one with new binding named \ls{wrapper\_typing}.
\begin{lstlisting}
%splice_t[wrapper_typing] (generate_derivation (`wrapper))
\end{lstlisting}

The \ls{generate_derivation} metaprogram does the following steps:
\begin{enumerate}
  \item {\bf Goal Formation:} It constructs the expected type for the derivation
    by using the empty typing environment and thunking the program: \ls{typing_oval empty _ (fun _ -> wrapper)}.
  \item {\bf Term Generation:} It generates a term for the derivation with
    uninstantiated implicit arguments.
    This is a syntactic mapping from \fstar{} syntax to the typing relation
    constructors.
    The mapping distinguishes between pure terms and computations in the \ls{io} monad,
    selecting appropriate typing rules.
    We have to keep track of the types of the binders so that
    we know when to use \ls{QApp} or \ls{QAppIO}.
    We map \fstar{}'s locally nameless representation to our representation,
    and we inline top-level definitions.
  \item {\bf Type-checking:} Finally, we have to 
    prove that the derivation term has the expected type (from step 1).
    Using Meta-\fstar{}, we split it into three steps to have more control
    over proving the type equality: 
  \begin{enumerate}
  \item {\bf Implicit Argument Instantiation:} It calls Meta-\fstar{} to infer the
  implicit arguments of the generated
  term (from step~2) using the expected top-level type (from step~1).
  This is one call and no special guidance is necessary for
  inference to be complete.
  The unification of the typing environment and of the types should be decidable,
  since the typing rules correspond to a simply typed FGCBV language.
  %
  The unification of open values and computations should always succeed because
  of the structure of the typing rules, which propagate them bottom-up.
  \item {\bf Typing inference:}
    At this moment, Meta-\fstar{} does not know yet the type of the derivation term.
    It got instruction just to instantiate the implicit arguments, which could be incomplete.
    Since we know the instantiation should be complete,
    the metaprogram calls Meta-\fstar{} to infer the type of the derivation term.
  \item {\bf Type Equality Check:} It checks that the inferred type of the derivation
  matches the expected type (from step~1).
    Crucially, this step checks that the program constructed by the derivation is
    equal to the original program (\ls{wrapper}).
    To solve the equality, it manually unfolds some specific definitions,
    and then it uses \fstar{}'s reflexivity tactic (\ls{trefl}), which first
    applies \fstar{}'s normalization to
    unfold definitions and to reduce primitive operations
    in both programs, and then performs unification.
    The equality check should always succeed because the
    derivation follows the structure of the program.
  \end{enumerate}
\end{enumerate}

\section{Relating the Trace-Producing Semantics of \iostar{} and \lambdaio{}}
\label{sec:relating}

In this section we formally relate the shallowly embedded \iostar{} language
with the deeply embedded \lambdaio{} language that \iostar{} programs are
extracted to. \lambdaio{} is an ML-like language based on the
simply typed \(\lambda\)-calculus. We
relate it to \iostar{} by equipping both languages with
trace-producing semantics.

\newtext{The formalization in this section and the proofs in the next are what let us verify the
second extraction step---syntax generation---once and for all, instead of relying
on translation validation for it, as all prior work on certifying extraction had to.
This step is also not a standard verified-compilation step: it relates the original
{\em shallowly embedded} \iostar{} program, which has no operational semantics of
its own, to a {\em deeply embedded} \lambdaio{} program, which does. The two cross-language
logical relations we define in this section are precisely what bridges this gap, and
they are the foundation on which the correctness and security proofs of
\autoref{sec:secure-compilation} rest.}


\subsection{Syntax of \lambdaio{}}

We begin by defining the syntax of \lambdaio{} expressions as the inductive \fstar{} type \ls{exp}.


\begin{lstlisting}
type exp =
| EVar : v:var -> exp (* variables represented as de Bruijn indices *) | ELam : exp -> exp (* $\textcolor{dkred}{\lambda}$-abstraction *)
| EFileDescr : file_descr -> exp (* file descriptor *) | EString : string -> exp (* string constant *)
| EZero : exp | ESucc : exp -> exp | ENRec  : exp -> exp -> exp -> exp  (* iterator *)
| ERead  : exp -> exp (* reading from a file *) | EWrite : exp -> exp -> exp (* writing to a file *)
| EOpen  : exp -> exp (* opening a file *) | EClose : exp -> exp (* closing a file *)
(* there are also constructors for EUnit, Etrue, Efalse, EApp, EIf, EPair, EFst, ESnd, ECase, EInl, EInr *)
\end{lstlisting}

We also recursively define a predicate \ls{is_value e} that identifies
expressions \ls{e} which are values, i.e., that are closed expressions in introduction form, such as 
\(\lambda\)-abstractions and pairs of values.


The \ls{ERead}, \ls{EWrite}, \ls{EOpen}, and \ls{EClose} expressions represent the extension of the simply typed \(\lambda\)-calculus
to include effectful IO operations. These operations can fail. Their results are of type \ls{either a err}, where the left
injection represents successful execution and the right injection of error represents a failed execution. We use
strings to represent file names and the data that can be read from or written to a file. We use file descriptors
to represent the unique descriptor generated by the operating system with each open file call. For simplicity, these file 
descriptors are modelled as natural numbers.
\ls{ERead} takes a file descriptor and returns an optional string, \ls{EWrite}
takes a file descriptor and a string and returns an optional unit, \ls{EOpen} takes a string and returns
an optional file descriptor, and \ls{EClose} takes a file descriptor and returns an optional unit.


\subsection{Small-Step Trace-Producing Operational Semantics of \lambdaio{}}\label{sec:semantics-lambdaio}

Next, we equip \lambdaio{} with an operational semantics. Our
semantics is based on sequences of events, called \emph{traces}, which
arise from the execution of IO operations (\ls{ERead}, \ls{EWrite}, \ls{EOpen},
and \ls{EClose}). Events are indexed by the arguments and result of the
operation, e.g., executing the read operation \ls{ERead (EFileDescr fd)} produces an event 
\ls{EvRead fd r}, where \ls{r} is the optional string that was read.

To link the computations being executed to the traces they produce, we represent
the small-step operational semantics of \lambdaio{} as a reduction relation
defined as an inductive type indexed by two \lambdaio{} expressions
(representing one step of execution), a list of past events (history), and an
optional event indexed by the history.
The optional event allows us to explicitly differentiate between effectful and
pure reduction steps. Pure reduction steps are indexed by
\ls{None}, and reduction steps involving the execution of IO operations are
indexed by the corresponding events.
As is standard practice, we define the computation rules of the operational
semantics on closed expressions~\cite{ahmed2004semantics}.
The reduction rules include congruence rules to transform subexpressions into
value forms, and computation rules to execute expressions when they have been
reduced to redex forms.
To illustrate the \ls{step} relation, we present one pure reduction rule
(\(\beta\)-reduction), and the congruence rule and two reduction rules for the
(un)successful execution of file opening. Reduction
rules for other expressions (in particular, other IO operations) are defined analogously, and can be found in the artifact.
\begin{lstlisting}
type step : closed_exp -> closed_exp -> h:history -> option (event_h h) -> Type =
...
| SBeta : e11:exp{is_closed (ELam e11)} -> e2:value -> h:history ->
  step (EApp (ELam e11) e2) (subst_beta e2 e11) h None
...
| SOpen : #str:closed_exp -> #str':closed_exp -> #h:history -> #oev:option (event_h h) ->
  step str str' h oev ->
  step (EOpen str) (EOpen str') h oev
| SOpenReturnSuccess : str:string -> h:history ->
  step (EOpen (EString str)) (EInl (EFileDescr (fresh_fd h))) h (Some (EvOpen str (Inl (fresh_fd h))))
| SOpenReturnFail : str:string -> h:history ->
  step (EOpen (EString str)) (EInr (EString "err")) h (Some (EvOpen str (Inr "err")))
\end{lstlisting}

We note that the optional event argument of \ls{step} is indexed by the history
because certain effectful steps use the history to generate the \lambdaio{}
expression to which they step. For instance, in \ls{SOpenReturnSuccess}, the
function \ls{fresh_fd: history -> file_descr} is used to generate a fresh unique
file descriptor based on the previously generated file descriptors listed in the
history \ls{h}.
Meanwhile, in \ls{SOpenReturnFail}, the event \ls{EvOpen} records with the error \ls{"err"} that opening the file was unsuccessful. 

The reader should observe that this operational semantics does not actually model
any of the files being accessed as some form of state. Instead, the semantics models
all the possible IO events that might be produced if actual files were
to be accessed by programs written in \lambdaio.

To compute the reflexive-transitive closure of the \ls{step} relation, we want
to consider traces of events corresponding to the events produced by
the sequence of steps. To maintain the well-formedness of events with respect to
a particular history for each of the individual reduction steps, we define a
notion of local traces, which are traces that are well formed with respect to a
particular history from which they start. The well-formedness of a trace is calculated recursively.
\begin{lstlisting}
let rec well_formed_local_trace (h:history) (lt:trace) =
  match lt with | [] -> True
  | EvOpen _ (Inl fd) :: tl -> fd == (fresh_fd h) /\ well_formed_local_trace (ev::h) tl
  | ev :: tl -> well_formed_local_trace (ev::h) tl
\end{lstlisting}

This definition of well-formedness ensures that all the events in the local
trace are well formed with respect to the given history \emph{and} the events
preceding them in the local trace. This refinement on traces greatly simplifies
formalization and reasoning about the validity of a local trace.
Monotonically increasing local traces are trivially well-formed, and the order
of sequential traces is clear from histories by which the local traces are
indexed. For example, a local trace \lstinline|lt2:local_trace (h++lt1)| can
only be appended to a local trace typed as \lstinline|lt1:local_trace h|, 
giving \ls{lt1@lt2 : local_trace h}. Here the append function \ls{(++) : (h:history) -> local_trace h -> history}
reverses the given local trace and prepends it to the history. Therefore, the
most recent events appear at the beginning of the history and are in reverse
order. For local traces, we use above the built-in append function 
\ls{(@) : #h:history -> lt:local_trace h -> lt':local_trace (h++lt) -> local_trace h} 
with the additional refinement that the resulting local trace is well formed with
respect to the history h.

We then represent the reflexive-transitive closure of step as an inductive type
\ls{steps} indexed by two closed \lambdaio{} expressions, a history, and a local
trace modelling the events produced by the
sequence of steps and indexed by the history where these steps start.
The definition is mostly standard, with local traces getting concatenated when
the corresponding reduction steps are concatenated.

\begin{lstlisting}
type steps : closed_exp -> closed_exp -> h:history -> local_trace h -> Type =
| SRefl  : e:closed_exp -> h:history -> steps e e h []
| STrans : #e1:_->#e2:_->#e3:_->#h:_ ->#oev:option(event_h h)->#lt:local_trace(h++(as_lt oev))->
  step e1 e2 h oev -> steps e2 e3 (h++(as_lt oev)) lt -> steps e1 e3 h ((as_lt oev) @ lt)
\end{lstlisting}

We define the type \ls{e_beh e e' h lt} to represent \ls{steps e e' h lt}
where \ls{e'} is an irreducible expression.

\subsection{Syntax and Semantics of \iostar{}}\label{sec:syntax-semantics-iostar}

We represent the IO computations of the shallowly embedded \iostar{} language in
a standard way, using a free monad, whose functor part \ls{io a} is defined inductively by
the following two cases.
\da{something to cite?}
\begin{lstlisting}
type io (a:Type u#a) : Type u#a =
| Return : a -> io a
| Call : o:io_ops -> args:io_args o -> (io_res o args -> io a) -> io a
\end{lstlisting}
Elements of \ls{io a} are computation trees, which intuitively either represent
returning a pure value (\ls{Return x}), or performing an operation \ls{o} and
proceeding as the continuation \ls{k} (\ls{Call o args k}). For example, the
\ls{openfile fn} operation is represented in this monad as the following
\fstar{} expression:
\begin{lstlisting}
let openfile (fnm:string) : io (resexn file_descr) = Call OOpen fnm Return
\end{lstlisting}

The return and bind operations of this monad are defined in a standard way
for a free monad:
\begin{lstlisting}
let io_return (#a:Type) (x:a) : io a = Return x
let rec io_bind #a #b (c:io a) (k:a -> io b) : io b = match c with
| Return x -> k x
| Call o args ok -> Call o args (fun i -> io_bind (ok i) k)
\end{lstlisting}

We give semantics to computations represented using \ls{io} in the
tradition of Dijkstra monads~\cite{dm4all}, in terms of a monad morphism from
the \ls{io} monad to some other suitable monad, which then specifies the
computational behaviour of the \ls{io} computations. The monad we use for 
the semantics is a predicate transformer monad (mapping postconditions to 
preconditions) specialised to IO computations and the traces they produce. The functor part of this 
monad is given by the type \ls{hist}.
\begin{lstlisting}
type hist_post (h:history) a = lt:local_trace h -> r:a -> Type0
type hist0 a = h:history -> hist_post h a -> Type0
let hist_post_ord (#h:history) (p1 p2:hist_post h 'a) = forall lt r. p1 lt r ==> p2 lt r
let hist_wp_monotonic (wp:hist0 'a) =
  forall h (p1 p2:hist_post h 'a). (p1 `hist_post_ord` p2) ==>  (wp h p1 ==> wp h p2)
type hist a = wp:(hist0 a){hist_wp_monotonic wp}
\end{lstlisting}
Elements of \ls{hist a} are monotonic predicate transformers that map a
postcondition (a predicate on the local trace and return value of a
computation) to a precondition (a predicate on the history trace of the
computation). That the postcondition is only over the local trace of the
computation and not the history plus the local trace, as one might naively
expect, makes this monad akin to update monads~\cite{ahman13update}, 
and promotes more local and modular reasoning.
An analogous, but slightly simplified predicate transformer monad for IO was also considered by
\citet{dm4all}.

The unit and return operations for this monad are defined analogously to other
predicate transformer monads in the Dijkstra monads tradition~\cite{dm4all}. One
just needs to take care about correctly extending the history and local trace
for the continuation in the bind operation.
\begin{lstlisting}
let hist_return #a (x:a) : hist a = fun _ p -> p [] x
let post_shift #a (h:history) (p:hist_post h a) (lt:local_trace h) : hist_post(h++lt)a = fun lt' r -> p (lt @ lt') r
let hist_post_bind #a #b (#h:history) (kw : a -> hist b) (p:hist_post h b) : Tot (hist_post h a) =
  fun lt r -> kw r (h ++ lt) (post_shift h p lt)
let hist_bind #a #b (w : hist a) (kw : a -> hist b) : hist b = fun h p -> w h (hist_post_bind kw p)
\end{lstlisting}
The return of this monad is the predicate transformer which requires the postcondition
to hold for the empty local trace and the given pure value \ls{x}. The bind  
composes predicate transformers. Due to the postconditions referring to local traces, 
we have to apply the postcondition \ls{p} supplied to the composed predicate transformer to
the concatenation of the local traces (see \ls{post_shift}).

We consider two elements of \ls{hist a} to be equal when they are pointwise equivalent.
\begin{lstlisting}
let hist_equiv w1 w2 = forall h p. w1 h p <==> w2 h p
\end{lstlisting}

In order to give a semantics to \iostar{} computations, we define a monad
morphism \ls{theta} from computations represented in \ls{io a} to predicate
transformers in \ls{hist a}, again in the tradition of Dijkstra
monads~\cite{dm4all}. Intuitively, the monad morphism \ls{theta} 
returns a predicate transformer that computes for any postcondition the 
weakest precondition for the given computation. We define it as follows:
\begin{lstlisting}
let op_wp (o:io_ops) (args:io_args o) : hist (io_res o args) =
  fun h p -> True /\ (forall lt r. (io_post h o args r /\ lt == [op_to_ev o arg r]) ==> p lt r)
let rec theta #a (m:io a) : hist a = match m with
| Return x -> hist_return x
| Call o args k -> hist_bind (op_wp o args) (fun r -> theta (k r))
\end{lstlisting}
As we see, \ls{theta} maps pure computations returning a value to the predicate
transformer that requires the postcondition to hold for this value as the computed precondition. The
operation calls are mapped to the composition (using \ls{hist_bind}) of a predicate transformer
corresponding to the operation (using \ls{op_wp}, which returns a predicate
transformer based on the pre- and postcondition of an operation) and the
predicate transformer recursively computed from its continuation. 




Finally, we use \ls{theta} to specify the semantics of \iostar{} computations in
terms of predicates on the local traces the computations produce and the values
they return. Formally, the predicate \ls{fs_beh fs_e h} we define is the strongest 
postcondition corresponding to the predicate transformer \ls{theta fs_e}.~\cite{dm4all}
\begin{lstlisting}
let fs_beh #a (fs_e:io a) (h:history) : hist_post h a = fun lt res -> forall p. (theta fs_e) h p ==> p lt res
\end{lstlisting}

\da{The intuitive meaning of \ls{fs_beh} needs more explaining. Has \ls{fs_prod} been defined somewhere already?}
\ap{\ls{fs_beh} no longer uses \ls{fs_prod}, so I think we can use \ls{fs_beh} to explain things.} \da{Intuitive meaning still needs more explanation.} \da{Reworded it a bit.} \da{A reviewer might ask why not define \ls{fs_beh} directly? Why do we jump through multiple hoops in its definition?}

\subsection{Quoted Types at Which Programs Are Related}\label{sec:quoted_types} 
The next predicate characterizes the types we want to be able to extract, and, thus, the
types of \lambdaio{} and \iostar{} expressions that we want to be able to relate with the logical relations.
It includes base types (\ls{unit}, \ls{bool}, \ls{file_descr}, and \ls{nat}),
function types (both pure and effectful), and
product and sum types.
\begin{lstlisting}
noeq type type_rep : Type -> Type =
| QUnit $\ $ : type_rep unit | QBool : type_rep bool | QFileDescriptor : type_rep file_descr | QNat : type_rep nat
| QArr $\ \ \ $ : #a:Type -> #b:Type -> type_rep a -> type_rep b -> type_rep (a -> b)
| QArrIO : #a:Type -> #b:Type -> type_rep a -> type_rep b -> type_rep (a -> io b)
| QPair $\ \ $ : #a:Type -> #b:Type -> type_rep a -> type_rep b -> type_rep (a & b)
| QSum $\ $ : #a:Type -> #b:Type -> type_rep a -> type_rep b -> type_rep (either a b)
\end{lstlisting}
We define the type \ls{qType} as a dependent pair between an \fstar{} type and 
a proof that it satisfies the predicate, as \ls{t:Type0 & type_rep t}.
Formally, the environments and the typing relation are defined in terms of \ls{qType},
but we hide it in the paper for readability.

\ca{Arr and ArrIO, no IO.}\ca{why no IO?}\da{Is this a solved old comment?}

\subsection{Target-to-Source and Source-to-Target Logical Relations for Relating Programs}\label{sec:logical-relations}
To verify \seiostar{}, we have to formally relate \lambdaio{} programs with
\iostar{} programs.
At a high level, two programs are related if 
they have the same behavior---\IE they evaluate to the same {\em results} and produce the same {\em local
traces}.
We relate a \lambdaio{} program to an
\iostar{} program using two logical relations, one for each direction.
In the target-to-source relation, for all \lambdaio{} behavior, 
there must exist corresponding \iostar{} behavior.
In the source-to-target relation, we have the dual requirement.

As is standard~\cite{ahmed2004semantics}, 
each logical relation is defined from a value relation and,
because \iostar{} is a FGCBV language, two expression relations:
a pure expression relation and an IO expression relation. 
The way that we define \iostar{} expressions, as either \fstar{} values or
\fstar{} computations in the \ls{io} monad, allows us to differentiate between
pure and effectful expressions.


\subsubsection{Logical Relations for \emph{Closed} \lambdaio{} and \iostar Expressions}\label{sec:closed-logical-relations}

For the \emph{target-to-source} logical relation, we define the value relation
\ls{qt ∋ (h, fs_v, v)}  between histories \ls{h}, \iostar{} values \ls{fs_v},
and \lambdaio{} values \ls{v} by induction on (quoted) types in the natural way,
as shown below. The only subtlety in this definition is that in the case of the
two kinds of function types, the definition is given in the style of  Kripke's
possible worlds semantics, where the function arguments are quantified and the applications considered
in histories extended with arbitrary compatible local traces.
\begin{lstlisting}
let (∋) (qt:qType) (h:history) (fs_v:qt._1) (v:value) = match qt._2 with
| QUnit -> fs_v == () /\ v == EUnit | QFileDescriptor -> v == EFileDescr fs_v | QString -> v == EString fs_v
| QBool -> (fs_v == true /\ v == Etrue) \/ (fs_v == false /\ v == Efalse)
| QArr qt1 qt2 -> let ELam body = v in forall (arg:value) (fs_arg:qt1._1) (lt:local_trace h).
  qt1 ∋ (h++lt, fs_arg, arg) ==> (qt2 ⊇ (h++lt, fs_v fs_arg, subst_beta arg body))
| QArrIO qt1 qt2 -> let ELam body = v in forall (arg:value) (fs_arg:qt1._1) (lt:local_trace h).
  qt1 ∋ (h++lt, fs_arg, arg) ==> (qt2 ⫄ (h++lt, fs_v fs_arg, subst_beta arg body))
(* QPair, QSum and QNat are defined as expected *)
\end{lstlisting}
The two expression relations are defined below. They specify that the
behaviour of the \lambdaio{} expression must imply a corresponding behaviour for
the \iostar{} expression. The difference is that the pure relation requires the
local trace to be empty. These relations express that any behaviour possible for a
\lambdaio{} expression that an \iostar{} expression is extracted to was already
possible for the \iostar{} expression.
\begin{lstlisting}
and (⊇) (qt:qType) (h:history) (fs_e:qt._1) (e:closed_exp) =
  forall (e':closed_exp) (lt:local_trace h). e_beh e e' h lt ==> (qt ∋ (h, fs_e, e') /\ lt == [])
and (⫄) (qt:qType) (h:history) (fs_e:io qt._1) (e:closed_exp) =
  forall (e':closed_exp) (lt:local_trace h). e_beh e e' h lt ==> (exists (fs_e':qt._1). qt ∋ (h++lt, fs_e', e') /\ fs_beh fs_e h lt fs_e')
\end{lstlisting}

For proving our security criterion in \autoref{sec:secure-compilation}, 
it is useful to restrict these relations to hold for \emph{all}
histories. For example, we define \ls{fs_e valid_contains\ e} to mean that
\ls{forall (h:history). t ∋ (h, fs_e, e)}. Analogously, we define \ls{fs_e
valid_superset_val\ e} and \ls{fs_e valid_superset_prod\ e} to mean that
\ls{forall (h:history). t ⊇ (h, fs_e, e)} and \ls{forall (h:history). t ⫄ (h,
fs_e, e)}.

For the \emph{source-to-target} logical relation, the value relation \ls{qt ∈
(h, fs_v, v)} is defined analogously by induction on types, with the only
difference between the two logical relations being how the pure and IO
expression relations are defined that the cases for the two function types refer
to.
\begin{lstlisting}
let (∈) (qt:qType) (h:history) (fs_v:qt._1) (v:value) = ...
and (⊆) (qt:qType) (h:history) (fs_e:qt._1) (e:closed_exp) =
  exists (e':closed_exp). e_beh e e' h [] /\ qt ∈ (h, fs_e, e')
and (⫃) (qt:qType) (h:history) (fs_e:io qt._1) (e:closed_exp) = 
  forall (fs_e':qt._1) (lt:local_trace h). fs_beh fs_e h lt fs_e' ==> (exists (e':closed_exp). qt ∈ (h++lt, fs_e', e') /\ e_beh e e' h lt)
\end{lstlisting}
The definitions are dual to the target-to-source logical relation defined above.
Here we require that the behaviour of the \iostar{} expression must imply a
corresponding behaviour for the \lambdaio{} expression.

Similarly to the target-to-source relation, we also find it useful to 
restrict these relations to hold for all histories. For instance, we
define \ls{fs_e valid_member_of e} to mean that \ls{forall (h:history). t ∈ (h, fs_e, e)}.

\subsubsection{Logical Relations for \emph{Open} \lambdaio{} and \iostar{} Expressions}\label{sec:open-logical-relations}

While closed expressions are evaluated at runtime, static type-checking
rules operate on open expressions that might contain free variables.
To use the logical relations we defined above on open expressions, we define
type environments (mapping from variables to  
optional \ls{qType}s), \iostar{} evaluation environments (mapping from variables to 
\iostar{} values), and \lambdaio{} evaluation environments (mapping from variables 
to \lambdaio{} values---in other words, these are (parallel) substitutions for \lambdaio{}). 

\da{The \lambdaio{} evaluation environments are really substitutions. Why aren't we calling them that?}
\ap{There is a comment in the artifact above where they are defined calling them evaluation
environments, so I stuck to that - but we do call them \ls{sub} in the code.}

The type environment maps to optional \ls{qType}s because
we use de Bruijn indices to represent variables. Therefore, every (natural number) index must have a
corresponding entry, but only the free variables we seek to instantiate map to
the appropriate \ls{qType}s. Both evaluation environments are indexed by the 
type environment.
The \lambdaio{} evaluation environment is additionally indexed by
a Boolean flag which specifies whether the substitution is in fact just a variable renaming.
\ap{why do we need this again for \iostar{} side? don't 
we only deal with \iostar{} values and computations? aren't these closed expressions?}

We say that \iostar{} and \lambdaio{} evaluation environments are logically related at a
particular type environment \ls{tyG} if the two evaluation environments map
variables in \ls{tyG} to related values.
\begin{lstlisting}
let (∽) (#tyG:typ_env) (h:history) (fsGs:eval_env tyG) #b (eGs:gsub tyG b) =
  forall (x:var). Some? (tyG x) ==> Some?.v (tyG x) ∋ (h, fsGs x, eGs x)
let (≍) (#tyG:typ_env) (h:history) (fsGs:eval_env tyG) #b (eGs:gsub tyG b) =
  forall (x:var). Some? (tyG x) ==> Some?.v (tyG x) ∈ (h, fsGs x, eGs x)
\end{lstlisting}

Now, we can relate open \lambdaio{} and \iostar{} expressions at a given type
environment by specifying that for every related \lambdaio{} and
\iostar{} evaluation environment indexed by the type environment, the corresponding
closed \lambdaio{} and \iostar{} expressions must be related at the expected type. 
We present the definition for the target-to-source logical relation below.
The source-to-target relation is analogous.
\begin{lstlisting}
let (⊐) (#tyG:typ_env) (qt:qType) (fs_oe:eval_env tyG -> qt._1) (oe:exp) = 
  fv_in_env tyG oe /\ 
  forall b (eGs:gsub tyG b) (fsGs:eval_env tyG) (h:history). fsGs `(∽) h` eGs ==> qt ⊇ (h, fs_oe fsGs, gsubst eGs oe)
\end{lstlisting}
The predicate \ls{fv_in_env} specifies
that all free variables in the \lambdaio{} expression are mapped to \ls{Some qt} 
by the type environment \ls{tyG}. Open \iostar{} values are modelled as functions from an evaluation environment
that closes them to the corresponding closed \iostar{} value.
In the definition, \ls{fs_oe fsGs} is the result of replacing all 
the free variables in \ls{fs_oe} with their values under the \iostar{} evaluation
environment \ls{fsGs}, with \ls{gsubst eGs oe}
doing the same but for \ls{oe} using the \lambdaio{} evaluation environment \ls{eGs}.
For open \lambdaio{} expressions and \iostar{} computations, their closed 
variants must be in the IO expression relation.

\subsection{Compatibility of the Logical Relations with Typing Rules}
\label{sec:compatibility-lemmas}

As part of proving our security criterion in \autoref{sec:rrhp}, we prove in
\autoref{sec:overview-compilation-model} the fundamental property of logical
relation for both of our logical relations, relating a source program to the
corresponding compiled code. As is typical, our proofs of the fundamental property
also proceed by induction on typing derivations, and to keep the proof manageable,
one proves a number of compatibility lemmas showing that the logical relations satisfy rules
analogous to the typing rules of the language.

Since our logical relations relate both pure and IO expressions, we need to
prove three kinds of compatibility lemmas for each typing rule:
\begin{enumerate}
  \item when the subexpressions are pure expressions and the composite expression is pure;
  \item when some of the subexpressions are pure expressions and some are IO expressions and the composite 
  expression is an IO expression;
  \item when the subexpressions are IO expressions and the composite is also an IO expression.
\end{enumerate}

We illustrate this with the three kinds of compatibility lemmas for the
source-to-target logical relation corresponding to the typing rule for function
application. The \ls{c1} lemma considers a pure arrow and a pure argument.
The \ls{c2} lemma considers an arrow that takes a pure argument and returns an IO
computation and a pure argument. The \ls{c3} lemma considers an IO arrow and an 
IO argument. \ap{whatever alternative name that I give these lemmas is too long... ideas?}
\begin{lstlisting}
let app_fn_arg #tyG (#a #b:qType) (fs_f:eval_env tyG -> (a._1 -> b._1)) (fs_x:eval_env tyG -> a._1) (f x:exp)
  : Lemma (requires fs_f ⊐ f /\ fs_x ⊐ x)
           (ensures (fun fsG -> (fs_f fsG) (fs_x fsG)) ⊐ EApp f x)
let app_io_fn_arg #tyG (#a #b:qType) (fs_f:eval_env tyG -> (a._1 -> io b._1)) (fs_x:eval_env tyG -> a._1) (f x:exp)
  : Lemma (requires fs_f ⊐ f /\ fs_x ⊐ x)
           (ensures (fun fsG -> (fs_f fsG) (fs_x fsG)) ⊒ EApp f x)
let app_io_fn_io_arg #tyG (#a #b:qType) (fs_f:eval_env tyG -> io (a._1 -> io b._1)) (fs_x:eval_env tyG -> io a._1) (f x:exp)
  : Lemma (requires fs_f ⊒ f /\ fs_x ⊒ x)
           (ensures (fun fsG -> io_bind (fs_f fsG) (fun f' -> io_bind (fs_x fsG) (fun x' -> f' x'))) ⊒ EApp f x)
\end{lstlisting}


We sketch the proof of \ls{app_io_fn_io_arg}. Full details of the proofs are available in our artifact.


\begin{enumerate}
     
  \item For an arbitrary \lambdaio{} evaluation environment and related \iostar{} evaluation environment, we close the 
    open \iostar{} and \lambdaio{} expression with these environments. For the rest of the proof sketch, we use \ls{e == EApp f x} and
    \ls{fs_e == io_bind fs_f (io_bind fs_x)} to refer to these closed expressions.
    We consider arbitrary irreducible closed \lambdaio{} expression \ls{e'} and local trace \ls{lt:local_trace h}
     such that \ls{e_beh e e' h lt}, and we need to prove \ls{exists (fs_e':b._1). b ∋ (h++lt, fs_e', e') /\\ fs_beh fs_e h lt fs_e'}.
  \item We then destruct the steps from \ls{e} to \ls{e'} into the intermediate steps of the 
     subexpressions.
     \begin{enumerate}
      \item For the lambda expression, we get \ls{e_beh f (ELam f') h lt1} and \ls{e_beh e (EApp (ELam f') x) h lt1.}
      \item Since the reduction of the lambda expression corresponds to \ls{lt1:local_trace h},
          we reduce the argument expression starting from the history \ls{h++lt1}. For 
          the argument expression, we get \ls{e_beh x x' (h++lt1) lt2} and \ls{e_beh (EApp (ELam f') x) (subst_beta x' f') (h++lt1) lt2}.
      \item Finally, we get that \ls{e_beh (subst_beta x' f') e' ((h++lt1)++lt2) lt3} and \ls{lt == lt1@lt2@lt3}. 
     \end{enumerate}
     \item Next, we use the fact that relatedness of the \lambdaio{} and \iostar{} evaluation
      environments is closed under history extension to apply the induction hypothesis and
      get information about the \iostar{} behavior corresponding to these three sequences of steps, specifically 
      that \newline
     \ls{exists (fs_x':a._1). fs_beh fs_x (h++lt1) lt2 fs_x'} and \ls{exists (fs_f':a._1 -> io b._1). fs_beh fs_f h lt1 fs_f'}.
  \item By unfolding the IO arrow, since we know that \ls{e_beh (subst_beta x' f') e' ((h++lt1)++lt2) lt3} and \ls{lt == lt1@lt2@lt3},
    we get that \ls{exists (fs_e':b._1). b ∋ (h++lt, fs_e', e') /\\} \ls{fs_beh (fs_f' fs_x') ((h++lt1)++lt2) lt3 fs_e'}.
  \item Since we know that \ls{fs_beh fs_x (h++lt1) lt2 fs_x'} and
    \ls{fs_beh (fs_f' fs_x') ((h++lt1)++lt2) lt3 fs_e'}, we use the fact
    that \ls{fs_beh} is a monad morphism to get that 
    \ls{fs_beh (io_bind fs_x fs_f') (h++lt1) (lt2@lt3) fs_e'}.
  \item Finally, we utilize this lemma once more with \ls{fs_beh fs_f h lt1 fs_f'} to get that
    \\ \ls{fs_beh (io_bind fs_f (io_bind fs_x)) h (lt1@lt2@lt3) fs_e'},
   which is exactly the desired \ls{fs_beh} behavior.
   \qed
\end{enumerate}

Other compatibility lemmas for the target-to-source relation always follow this general proof pattern:
\begin{enumerate}
  \item Destruct \ls{e_beh e e' h lt} into subexpressions' \ls{e_beh} behavior.
  \item Apply the induction hypothesis to get subexpressions' \ls{fs_beh} behavior.
  \item Combine \ls{fs_beh} behaviors to get the composite \ls{fs_beh} behavior corresponding to 
    \ls{e_beh e e' h lt}.   
\end{enumerate}

Because of how we defined our source-to-target relation, we have two general patterns for 
proving compatibility lemmas depending on whether we are proving relatedness in the pure 
or IO expression relation. In the pure expression relation, we do not destruct \ls{fs_beh fs_e h lt fs_e'}. We can 
directly use the induction hypothesis to get the subexpressions' \ls{e_beh} behavior and 
combine these \ls{e_beh} behaviors to get the composite \ls{e_beh} behavior corresponding to
\ls{fs_beh fs_e h lt fs_e'}.

In the IO expression relation, we follow a dual pattern to the target-to-source proof pattern 
above:
\begin{enumerate}
  \item Destruct \ls{fs_beh fs_e h lt fs_e'} into subexpressions' \ls{fs_beh} behavior.
  \item Apply the induction hypothesis to get subexpressions' \ls{e_beh} behavior.
  \item Combine \ls{e_beh} behaviors to get the composite \ls{e_beh} behavior corresponding to
    \ls{fs_beh fs_e h lt fs_e'}.
\end{enumerate}


It is worth noting that destructing \ls{fs_beh} behaviours is more complicated
than destructing \ls{e_beh} behaviours. While \ls{e_beh} behaviours are defined
inductively and can be effectively pattern-matched on, the
\ls{fs_beh} behaviours are defined in terms of logical implications (\autoref{sec:syntax-semantics-iostar}), leading to 
indirect reasoning about destructing local traces by instantiating
them with suitably postconditions.

\section{\seiostar{}: Formally Secure Extraction}
\label{sec:secure-compilation}


\seiostar{} involves two steps: relational quotation and syntax generation.
Relational quotation is a translation validation technique that involves a metaprogram,
while syntax generation is a pure \fstar{} function that we verify once and for all.
In this section, we explain how we
instantiate the compilation model of \citet{AbateBGHPT19}
with \seiostar{} (\autoref{sec:overview-compilation-model})
to prove that it satisfies RrHP (\autoref{sec:rrhp}).
\ca{can we fit the intuition that compilation preserves the semantic typing of the source program?
    or is it enough to say that RrHP is transitive}

\subsection{Overview of the Compilation Model}\label{sec:overview-compilation-model}

When we instantiate the compilation model of \citet{AbateBGHPT19} with \seiostar{}
we cannot talk about the relational quotation metaprogram,
since in \fstar{} we cannot state any extrinsic property about a metaprogram.
So we do this instantiation such that the final criterion we prove about the model
directly relates the shallowly embedded program with its extraction produced by \seiostar{}.

The source language is \iostar{}.
A partial source program (denoted by $P$, of type \ls{progS}) is a dependent pair containing
(1)~an \fstar{} function in the \ls{io} monad taking a context as an argument and returning a Boolean,
and (2)~a typing derivation that proves the program is in the \iostar{} language.
%
The partial source program and the source context share an interface (of type \ls{interface})
that contains only the type of the context \ls{ct}.
A source context can be any
\fstar{} code with a type that can be quoted (\autoref{sec:quoted_types}).\ch{Where
  does this restriction come from? It's not like we're going to quote the source context, are we?}\ca{from the interface}
Source linking (\ls{linkS}, denoted by $C^S[P]$) is function application.
\begin{lstlisting}
noeq type interface = { ct : Type; qtyp: type_rep ct; }
(** Source **)
type progS (i:interface) = ps:(i.ct -> io bool) & (typing_oval empty (i.ct -> io bool) (fun _ -> ps))
type ctxS (i:interface) = i.ct
type wholeS = io bool
let linkS (#i:interface) (ps:progS i) (cs:ctxS i) : wholeS = (dfst ps) cs
\end{lstlisting}
While each partial source program includes a typing derivation,
it is there only to be able to do compilation (one can think of it as a compilation hint)
and it is not relevant for the semantics of the partial program.
One can see above that during source linking, the typing derivation is discarded,
and whole source programs are just \ls{io} computations.
This is key to stating a secure compilation criterion that directly relates
the shallowly embedded source program with its extraction.

\begin{lstlisting}
(** Target **)
type progT (i:interface) = value
type ctxT (i:interface) = ct:value & typing$_\lambda$ empty ct i.ct
type wholeT = closed_exp
let linkT (#i:interface) (pt:progT i) (ct:ctxT i) : wholeT = EApp pt (dfst e)
\end{lstlisting}
\ca{decide if we use qType or this unfolded qType}
Compilation is our syntax generation function,\ch{explained anywhere?}
taking a source program $P$ and producing a target program denoted by $\cmp{P}$,
which is a value in the target language \lambdaio{}  (of type \ls{progT}).
A target program can be linked securely with any syntactically typed
target context (denoted by $C^T$, of type \ls{ctxT}).
To type target contexts, we introduce a simple syntactic typing relation,
\ls{typing}$_\lambda$, defined as expected.
For simplicity, when defining the typing relation for \lambdaio{},
we reuse the notion of types (\autoref{sec:quoted_types}) and typing environments from \iostar{}.
We also have the same source and target interfaces.
Finally, it is worth noting that the extracted program $\cmp{P}$ is not necessarily
syntactically well typed, but we prove it is semantically well typed as 
part of the logical relation proofs below.
The target of our extraction is basically untyped, similarly to
Malfunction~\cite{coq-to-ocaml}, as opposed to the default \fstar{} and Rocq
extraction that has to work hard to work around the limitations of a syntactic type
system~\cite{Letouzey04:thesis}.


Compilation is defined recursively on the typing relation of \iostar{}
and generates \lambdaio{} syntax.
We prove in \fstar{} that the source and compiled programs are related by both logical relations.
Since compilation returns a \lambdaio{} value,
we show that the source and compiled program are related for all histories by the two value relations
(by \ls{valid_contains} and \ls{valid_member_of}).
\newtext{We state the appropriate notion of the fundamental property for \ls{valid_contains} direction as a
theorem; the \ls{valid_member_of} direction is analogous.}
\begin{theorem}[Fundamental property]\label{thm:fundamental-property}
$$\forall I^S.\ \forall P:\mbox{\upshape \ls{prog}}^S\ I^S.\ P\ \ls{valid_contains}\ \cmp{P}$$
\end{theorem}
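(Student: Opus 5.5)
We prove a strengthened statement for \emph{open} terms, by mutual induction on the typing derivation carried inside $P$. Write $\cmp{\cdot}$ also for the syntax-generation function on derivations of \ls{typing_oval tyG a fs_v} and \ls{typing_oprod tyG a fs_c}. The generalized claim has two parts. For every value derivation \ls{d : typing_oval tyG a fs_v} we show \ls{fs_v} $\sqsupset$ $\cmp{d}$ in the pure open relation ($\sqsupset$). For every computation derivation \ls{d : typing_oprod tyG a fs_c} we show the corresponding open IO relation \ls{fs_c} $\sqsupseteq$ $\cmp{d}$. Since syntax generation is a structurally recursive pure \fstar{} function on the derivation, this induction can be written as a recursive \fstar{} lemma whose cases mirror the constructors of \autoref{fig:typing_relation}. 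Each case is closed by applying the induction hypotheses to the premises and then invoking the matching compatibility lemma of \autoref{sec:compatibility-lemmas}.

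The case analysis runs as follows. \ls{QAxiom} and \ls{QWeaken} follow from the definition of related environments ($\backsim$), using that \ls{hd}/\ls{tail} on the \iostar{} side correspond to looking up index $0$, respectively shifting, on the \lambdaio{} substitution. The constants and constructors (\ls{Qtt}, \ls{QInl}, \ls{QMkpair}, \dots) use the first kind of compatibility lemma. \ls{QApp} and \ls{QAppIO} use \ls{app_fn_arg} and \ls{app_io_fn_arg}. \ls{QReturn} shows a pure-related value is IO-related with empty local trace. \ls{QBind} and \ls{QCaseIO}/\ls{QIfIO} use the third and second kinds of lemmas. The IO primitives need per-operation lemmas matching \ls{op_wp} against the \ls{SOpenReturnSuccess}/\ls{SOpenReturnFail} steps. \ls{QNRec} needs an auxiliary induction on the natural number.

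The lambda cases \ls{QLambda} and \ls{QLambdaIO} are the Kripke cases. Given $h$, related environments, an extension $lt$, and related arguments, we must extend the environments with the argument at history $h{+}{+}lt$. This needs a \emph{monotonicity} lemma: $\ni$, and hence $\backsim$, is closed under history extension. We prove it by induction on \ls{qType}; it is immediate for the arrows because their clauses already quantify over all extensions. We also need that \ls{gsubst} of the extended substitution agrees with \ls{subst_beta} of the argument into the substituted body. This substitution-composition lemma is where I expect the most friction, together with the flag distinguishing renamings in \ls{gsub}. For \ls{QLambda} the reduction \ls{SBeta} then reduces the case to the induction hypothesis on the body.

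Finally, for the closed theorem, instantiate the generalized claim with \ls{tyG = empty}. Any $h$, the trivially related empty environments, and \ls{fv_in_env} make \ls{gsubst} the identity on the closed output. The result is $\forall h.\ \cmp{P}$ reduces with empty trace to a value related by $\ni$. Since $\cmp{P}$ is syntactically a value (an \ls{ELam}, because $P$'s derivation ends in \ls{QLambdaIO} or a value rule producing a value), \ls{e_beh} from a value is only reflexivity. Hence the pure relation $\supseteq$ collapses to $\ni$, giving $P$ \ls{valid_contains} $\cmp{P}$. The main obstacle is the \ls{QBind}/IO-lambda interaction: history shifts must line up between \ls{post_shift} in \ls{fs_beh} and concatenated local traces in \ls{steps}. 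We will handle this by first proving once that \ls{fs_beh} composes along \ls{io_bind}, i.e.\ the monad-morphism property of \ls{theta}.
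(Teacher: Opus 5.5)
Your proposal is correct and follows essentially the paper's route. You do structural induction on the \iostar{} typing derivation and close each case with the matching compatibility lemma, supported as in the paper by history-monotonicity of environment relatedness and the monad-morphism property of \ls{fs_beh}; then you instantiate at the empty environment and obtain $\ni$ from $\supseteq$ via the reflexive reduction of the value $\cmp{P}$.

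Two small corrections. First, that $\cmp{P}$ is a value comes from the type of compilation (\ls{progT i = value}), not from the shape of the derivation: that derivation may also end in a complex-value rule such as \ls{QApp} or \ls{QCase}, and for a non-value $\supseteq$ alone would not yield $\ni$. Second, the \ls{QLambda} case needs no \ls{SBeta} step, because the arrow clause of $\ni$ is already stated on \ls{subst_beta}, so your substitution-composition lemma plus the induction hypothesis suffice.
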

\begin{proof}
Proof by applying the relevant compatibility lemmas discussed in
\autoref{sec:compatibility-lemmas}.
\end{proof}

Finally, we give semantics to whole source and target programs using
$\leadsto_{S}$ and $\leadsto_{T}$, which are defined in terms of
\ls{fs_beh} (\autoref{sec:syntax-semantics-iostar}) and \ls{e_beh} (\autoref{sec:semantics-lambdaio}) starting from the empty history (\ls{[]}).
\begin{lstlisting}
let ($\leadsto_{S}$) (ws:wholeS) ((lt, r):(local_trace [] * bool)) = fs_beh ws [] lt r
let ($\leadsto_{T}$) (wt:wholeT) ((lt, r):(local_trace [] * bool)) = e_beh wt (if r then Etrue else Efalse) [] lt
\end{lstlisting}

\newtext{Having a semantics, we can state and prove compiler correctness for programs of type \ls{unit -> io bool}:}

\begin{theorem}[Compiler Correctness]
\label{thm:cc}
$$\forall P.\
  \forall t.\  P () \leadsto_{S} t \Leftrightarrow \ \mbox{\upshape \ls{EApp}}~\cmp{P}~\mbox{\upshape \ls{EUnit}}\ \leadsto_{T} t $$
\end{theorem}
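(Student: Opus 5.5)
We prove the two directions separately, each from one of the two logical relations. In both, the first step instantiates the fundamental property (Theorem~\ref{thm:fundamental-property}) at the interface whose context type is \ls{unit}, so $P:\mbox{\upshape \ls{unit -> io bool}}$. This gives $P\ \ls{valid_contains}\ \cmp{P}$, and its analogue for the source-to-target direction gives $P\ \ls{valid_member_of}\ \cmp{P}$. Both hold at every history, in particular at the empty history \ls{[]}. Unfolding the value relation at type \ls{QArrIO QUnit QBool} yields that $\cmp{P}$ is some \ls{ELam body}. Moreover, for every local trace $lt_0$ and related arguments, the application result is in the IO expression relation. We pick $lt_0=[]$, the source argument \ls{()}, and the target argument \ls{EUnit}; these are related at \ls{QUnit} by definition of $\ni$ and $\in$.

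\textbf{Right-to-left ($\Leftarrow$).} Let $t=(lt,r)$ with $\mbox{\upshape \ls{EApp}}~\cmp{P}~\mbox{\upshape \ls{EUnit}}\leadsto_T t$, i.e.\ \ls{e_beh (EApp (ELam body) EUnit) e' [] lt} with $e'$ the boolean literal for $r$. Since \ls{ELam body} and \ls{EUnit} are values, this reduction must begin with an \ls{SBeta} step producing no event. Inverting the first step of the \ls{steps} derivation gives \ls{e_beh (subst_beta EUnit body) e' [] lt}. For this we need a small inversion lemma: a value admits no \ls{step}, so the congruence rules for \ls{EApp} cannot fire. The target-to-source relation $\mbox{\upshape \ls{QBool}}\ \text{⫄}\ ([],P\,(),\mbox{\upshape \ls{subst_beta EUnit body}})$ then yields some $b$ with \ls{QBool} $\ni([]{+}{+}lt,b,e')$ and \ls{fs_beh (P ()) [] lt b}. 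The boolean case of $\ni$ forces $b=r$, because $e'$ is \ls{Etrue} iff $r$ holds. Hence $P\,()\leadsto_S t$.

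\textbf{Left-to-right ($\Rightarrow$).} Given \ls{fs_beh (P ()) [] lt r}, the source-to-target relation $\text{⫃}$ at the same instantiation yields some $e'$ with \ls{QBool} $\in([]{+}{+}lt,r,e')$ and \ls{e_beh (subst_beta EUnit body) e' [] lt}. The boolean case of $\in$ identifies $e'$ as \ls{if r then Etrue else Efalse}. Prepending the \ls{SBeta} step via \ls{STrans} with \ls{oev = None} gives the required target behaviour. This uses that $[]{+}{+}[]=[]$ and $[]@lt=lt$, both definitional or one-line lemmas on local traces.

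\textbf{Expected obstacles.} The main obstacle is bookkeeping rather than mathematics, and it has two parts. The first is the determinism and inversion of the first step in the $\Leftarrow$ direction, which requires the lemma that values are irreducible. The second is reconciling the dependently indexed histories and local traces (\ls{h++lt}, \ls{as_lt None}) so that \fstar{} accepts the equalities. A further subtlety is that \ls{e_beh} requires the final expression to be irreducible, and $e'$ must match the form used in $\leadsto_T$. We would discharge this using the boolean case of the value relations, since \ls{Etrue} and \ls{Efalse} are values and hence irreducible.
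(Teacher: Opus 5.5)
Your proof is correct and takes the paper's route: the paper's one-line sketch gets the two directions from Theorem~\ref{thm:fundamental-property} for \texttt{valid\_contains} and from its analogue for \texttt{valid\_member\_of}, and you do the same by unfolding the \texttt{QArrIO} case at history \texttt{[]} with the related arguments \texttt{()} and \texttt{EUnit}. Your only extra work is the explicit inversion (resp.\ prepending) of the \texttt{SBeta} step, which you could instead get by applying the application compatibility lemmas (e.g.\ \texttt{app\_io\_fn\_arg}) at the empty environment, as the RrHP proof sketch does when it relates the applications of the source and compiled programs.
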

\begin{proof}[Proof sketch]
\newtext{The equivalence follows from \autoref{thm:fundamental-property} and the analogous theorem for \ls{valid_member_of}.}
\end{proof}

\subsection{Robust Relational Hyperproperty Preservation (RrHP)}\label{sec:rrhp}
We show in \fstar{} that \seiostar{} satisfies Robust Relational Hyperproperty
Preservation (RrHP), the strongest secure compilation criterion of
\citet{AbateBGHPT19}.
RrHP not only implies full abstraction (\IE preserving observational equivalence),
but also ensures the preservation of trace properties 
and hyperproperties (like noninterference) against adversarial contexts.

\begin{theorem}[Robust Relational Hyperproperty Preservation (RrHP)]
\label{thm:rrhp}
$$\forall I^S.\ \forall C^T.\ \exists C^S.\ \forall P:\mbox{\upshape \ls{prog}}^S\ I^S.\ 
  \forall t.\  (C^T[\cmp{P}] \leadsto_{T} t \Leftrightarrow C^S[P] \leadsto_{S} t)$$
\end{theorem}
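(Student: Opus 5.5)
The plan is to construct the source context by \emph{back-translating} the target context semantically, and to obtain both directions of the equivalence from the two logical relations. Given $I^S$ and $C^T = (ct, d)$ with \ls{d : typing}$_\lambda$ \ls{empty ct i.ct}, I define the back-translation $C^S$ as an \fstar{} value of type \ls{i.ct}. It is built by recursion on the quoted type \ls{i.qtyp}, interpreting the syntactically typed \lambdaio{} value \ls{ct} as an \fstar{} value. At base types the value is simply read off. At \ls{QArrIO} types the result is a function into the free monad \ls{io}: it produces \ls{Call} nodes for each IO step of the target reduction and returns the back-translated final value. Since \lambdaio{} is terminating and simply typed, this interpretation can plausibly be defined as a denotational semantics of typed target terms into \ls{io}.

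The key lemma I would prove is a fundamental property for \emph{target} terms. Every syntactically typed \lambdaio{} context is related to its back-translation by both \ls{valid_contains} and \ls{valid_member_of}. I would prove it by induction on \ls{typing}$_\lambda$, reusing the compatibility lemmas of \autoref{sec:compatibility-lemmas}. Those lemmas are stated semantically on the pair (open \fstar{} value, \lambdaio{} expression), so they apply regardless of which side was produced by which. The only new cases are the IO primitives, and for each primitive the matching \ls{io_openfile}, \ls{io_read}, and so on is the evident source counterpart. Crucially, $C^S$ is chosen before $P$ and depends only on $C^T$, which matches the quantifier order $\exists C^S.\ \forall P$.

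Next I fix $P$ and $t = (lt, r)$. By \autoref{thm:fundamental-property}, $P$ is related to $\cmp{P}$ by \ls{valid_contains}, and by the analogous theorem it is also related by \ls{valid_member_of}. I unfold the \ls{QArrIO} value relation at history \ls{[]} and the empty local trace, instantiating the argument with $C^S$ and \ls{ct}. This makes \ls{linkS P C}$^S$ (that is, \ls{(dfst P) C}$^S$) related to \ls{subst_beta ct body} in both IO expression relations. One beta step relates this to \ls{EApp} $\cmp{P}$ \ls{ct}, since that step is pure and contributes no event.

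For the $\Rightarrow$ direction I use $\supseteq_{io}$. A target run to \ls{Etrue} or \ls{Efalse} with trace $lt$ yields some $fs'$ with \ls{QBool} $\ni$ relating $fs'$ to the final value, hence $fs' = r$, together with \ls{fs_beh} on $lt$. That is exactly $C^S[P] \leadsto_S t$. For $\Leftarrow$ I use the source-to-target IO relation dually, using determinism of the \ls{QBool} relation to match the final boolean.

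I expect the main obstacle to be defining the back-translation of target contexts as a total \fstar{} function. Doing so needs a normalization argument for \lambdaio{} carried into \ls{io} at higher-order types: the context receives the compiled program as an argument, and the back-translation must treat that argument as an opaque source function. The first thing I would try is to define back-translation by structural recursion on the typing derivation \ls{d}, in the same style as \ls{typing_oval}, mapping each target typing rule to the corresponding source combinator such as \ls{io_bind} or \ls{io_return}. This is the inverse of syntax generation and avoids needing a normalizer. Proving that this map is related to \ls{ct} then becomes another instance of the compatibility lemmas. A secondary nuisance is history threading: the Kripke quantification over $lt$ must be instantiated at the empty history, and the closure of the relations under history extension is needed.
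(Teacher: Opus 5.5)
Your proposal is correct and follows the paper's proof. The paper also instantiates $\exists C^S$ with a back-translation defined by recursion on the \lambdaio{} typing derivation of $C^T$, shows it related to $C^T$ in both logical relations via the compatibility lemmas (using that $C^T$ is a value), and combines this with the fundamental property for $P$ and linking-as-application to get each direction of the equivalence from one relation. One small correction to your last paragraph: linking applies $\cmp{P}$ to the context (\texttt{EApp}), not the reverse, so the opaque source values the back-translated context must handle are those $P$ passes to it at higher-order interface types---which is why the derivation-based back-translation you settle on works, while a reduction-running one would not.
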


The statement of RrHP includes an existential quantifier, $\exists C^S$, which we instantiate
using a back-translation function (denoted by $\bak{C^T}$) from target contexts to source contexts.
Since target contexts are syntactically well-typed,
back-translation is defined recursively on the typing derivations of \lambdaio{}.
The body of RrHP is an equivalence, so we prove each direction separately
by showing that the source context obtained by back-translation is
related to the original target context with respect to one of our two logical relations.
These results are another instance of the fundamental property of logical
relations for our two relations, so the proofs again rely on compatibility
lemmas from \autoref{sec:compatibility-lemmas}.

To prove the right-to-left direction of RrHP ($\Rightarrow$), we use the logical relation
that relates each target behavior to a source behavior (\autoref{sec:logical-relations}).

\begin{theorem}[RrHP $\Rightarrow$]
\label{thm:rrhp-right}
$$\forall I^S.\ \forall C^T.\ \forall P.\
  \forall t.\  C^T[\cmp{P}] \leadsto_{T} t \Rightarrow \bak{C^T}[P] \leadsto_{S} t$$
\end{theorem}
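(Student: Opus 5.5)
We prove the direction by chaining the fundamental property for the compiled program with a fundamental property for the back-translated context, both in the target-to-source relation, and then reading off the whole-program behavior.

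\textbf{Step 1 (back-translation is related).} First we establish that for every well-typed target context $C^T = (ct, d)$ with $d : \mbox{\upshape\ls{typing}}_\lambda\ \mbox{\upshape\ls{empty}}\ ct\ I^S.\mbox{\upshape\ls{ct}}$, the back-translation satisfies $\bak{C^T}\ \mbox{\upshape\ls{valid_contains}}\ ct$ at type $I^S$. We prove this by induction on $d$, generalised to open terms: for every derivation of $\Gamma \vdash e : qt$ we show that the open \iostar{} value produced by $\bak{\cdot}$ is related to $e$ by $\sqsupset$ (or by the IO variant $\sqsupseteq$ for effectful positions). Each case is discharged by the corresponding target-to-source compatibility lemma of \autoref{sec:compatibility-lemmas}. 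Back-translation must map each target construct to a source term matching one of those lemmas, e.g.\ \ls{EApp} to \ls{io_bind}-sequenced applications and \ls{EOpen} to \ls{openfile}. Specialising the empty environment then gives the closed, all-histories statement.

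\textbf{Step 2 (related programs).} By \autoref{thm:fundamental-property}, $P\ \mbox{\upshape\ls{valid_contains}}\ \cmp{P}$ at type $I^S.\mbox{\upshape\ls{ct}} \to \mbox{\upshape\ls{io bool}}$ (the \ls{QArrIO} case of $\ni$).

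\textbf{Step 3 (combine).} Fix $t=(lt,r)$ with $C^T[\cmp{P}] \leadsto_T t$, i.e.\ \ls{e_beh (EApp} $\cmp{P}$ $ct$\ls{)} $v_r$ \ls{[]} $lt$, where $v_r$ is \ls{Etrue} or \ls{Efalse}. We apply the compatibility lemma \ls{app_io_fn_arg} (closed instance, empty environment) to the facts from Steps 1 and 2. This gives \ls{bool} $\supseteq\!\!\!\!\supset$ \ls{([],} $(\mbox{\upshape\ls{dfst}}\ P)\ \bak{C^T}$\ls{,} $C^T[\cmp{P}]$\ls{)}. Instantiating with $v_r$ and $lt$ yields some $b$ with \ls{bool} $\ni$ \ls{([]++lt, b,} $v_r$\ls{)} and \ls{fs_beh} $(\bak{C^T}[P])$ \ls{[]} $lt$ $b$. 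From the \ls{QBool} case of $\ni$ we get $b = r$, which is exactly $\bak{C^T}[P] \leadsto_S t$.

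\textbf{Main obstacle.} The hardest part is Step 1. It needs a compatibility lemma for every \lambdaio{} typing rule, including the IO primitives and the failure results. Because syntactically typed target code can place effects anywhere, every back-translated subterm must live in the \ls{io} monad. So the \ls{c3}-style lemmas, which compose \ls{fs_beh} through \ls{io_bind} using the monad-morphism property of \ls{theta}, carry most of the work. A further difficulty is that a target value of type $a \to b$ has no effect restriction, so pure arrows must be back-translated through IO arrows. We plan to handle this by back-translating every target type into an IO-lifted source type and proving a small embedding lemma that relates pure and lifted source values under $\ni$.
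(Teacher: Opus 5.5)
Your overall argument is the paper's. The fundamental property gives $P$ \ls{valid_contains} $\cmp{P}$, and the verified back-translation gives $\bak{C^T}$ \ls{valid_contains} $C^T$. The application compatibility lemma \ls{app_io_fn_arg} turns these into the IO expression relation at \ls{bool} for the two linked programs, and the \ls{QBool} case of $\ni$ reads off the source run. Steps 2 and 3 are a faithful expansion of the paper's sketch. To feed the two \ls{valid_contains} facts into \ls{app_io_fn_arg} you also need a small lemma: for a closed value, $\ni$ at every history implies the open pure relation at the empty environment. This holds because a value is irreducible, so its only run is the reflexive one with empty trace.

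The gap is in how you plan to establish Step 1. You assert that a target value of a pure arrow type carries no effect restriction. You then propose to back-translate into IO-lifted source types and recover $I^S$.\ls{ct} through an embedding lemma. This cannot work, for two reasons. First, $\bak{C^T}$ must have type $I^S$.\ls{ct} itself, since source linking applies \ls{dfst P} to it. Second, the \ls{QArr} case of $\ni$ demands that the substituted target body be related by the pure expression relation, which forces an empty local trace. So no lemma can relate a pure source function to a target function whose body may emit events.

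In fact, if target pure arrows could perform IO, the theorem would be false. Take $I^S$.\ls{ct} to be \ls{unit -> bool}, $P$ to be \ls{fun c -> io_return (c ())}, and the context \ls{ELam (ECase (EOpen (EString "a")) Etrue Efalse)}. The linked target emits an \ls{EvOpen} event that no source run can produce.

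The missing ingredient is that the target typing relation must be effect-stratified. It needs a pure judgment (no IO primitives, only applications of pure arrows, bodies of pure arrows typed purely) and a separate IO judgment. Back-translation then maps pure derivations to pure open \fstar{} values, related by the open pure relation using the \ls{c1}-style lemmas. It maps IO derivations to open \ls{io} computations, related by the open IO relation using the \ls{c2}/\ls{c3}-style lemmas with \ls{io_bind} sequencing. The closed context value is then back-translated directly at type $I^S$.\ls{ct}, so no lifting or embedding lemma is needed. This also resolves an inconsistency in your proposal: Step 1 correctly uses the pure relation in pure positions and the IO relation in effectful ones, but you later claim every back-translated subterm must live in the \ls{io} monad.
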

\begin{proof}[Proof sketch]
  By \autoref{thm:fundamental-property}, we know that $P$~\ls{valid_contains}~$\cmp{P}$.
  Since back-translation is verified and $C^T$ is a value, we know that $\bak{C^T}$~\ls{valid_contains}~$C^T$.
  Since our programs are functions, it follows that $(P~\bak{C^T})$~\ls{valid_superset_prod (EApp}~$\cmp{P}C^T$\ls{)}.
  Since linking is function application, this is all we need.
\end{proof}

This RrHP direction already implies Robust Relational Subset-Closed Hyperproperty Preservation (RrSCHP),
which is a weaker criterion than RrHP, but still one of the strongest
secure compilation criteria of \citet{AbateBGHPT19}.
They show that RrSCHP implies full abstraction, and also ensures the preservation
of trace properties and subset-closed hyperproperties against adversarial contexts.

To prove the other direction of RrHP, we use the source-to-target logical relation (\autoref{sec:logical-relations}).\ca{what does this
  direction give us extra?}

\begin{theorem}[RrHP $\Leftarrow$]
\label{thm:rrhp-left}
$$\forall I^S.\ \forall C^T.\ \forall P.\
  \forall t.\  \bak{C^T}[P] \leadsto_{S} t \Rightarrow C^T[\cmp{P}] \leadsto_{T} t$$
\end{theorem}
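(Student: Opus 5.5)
The plan mirrors the proof of Theorem~\ref{thm:rrhp-right}, but now uses the source-to-target logical relation ($\in$, $\subseteq$, and the IO relation written with the reversed superset symbol in the definitions above). The argument has three steps.

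\textbf{Step 1: the fundamental property for the source-to-target relation.} By the \ls{valid_member_of} analogue of Theorem~\ref{thm:fundamental-property}, $P$~\ls{valid_member_of}~$\cmp{P}$ holds at the interface type \ls{i.ct -> io bool}. This is obtained by induction on the typing derivation packed in $P$, applying at each node the source-to-target compatibility lemmas from \autoref{sec:compatibility-lemmas}.

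\textbf{Step 2: the back-translated context is related to $C^T$.} We show $\bak{C^T}$~\ls{valid_member_of}~$C^T$, that is, $\forall h.\ \ls{i.ct} \in (h, \bak{C^T}, C^T)$. Back-translation is defined by recursion on the \ls{typing}$_\lambda$ derivation of $C^T$. So this is a second fundamental-property argument, by induction on the \lambdaio{} typing derivation. Each syntactic rule is matched with the corresponding source-to-target compatibility lemma, applied to the shallow term that back-translation produces for it. Because $C^T$ is a closed value, we first obtain the open relation at the empty environment. We then specialise to the closed value relation for every history, using an empty pair of evaluation environments, which are trivially related by $\asymp$.

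\textbf{Step 3: combine at linking.} Source linking is application, $\bak{C^T}[P] = (\mathit{dfst}\ P)\ \bak{C^T}$, and target linking is \ls{EApp}~$\cmp{P}$~$C^T$. Fix $h = []$ and unfold the \ls{QArrIO} case of $\in$ for $\cmp{P}$. With $\mathit{lt}=[]$ and argument pair $(\bak{C^T}, C^T)$, related by Step 2, this gives that $P\ \bak{C^T}$ and the body of $\cmp{P}$ substituted with $C^T$ are related in the IO relation at \ls{bool}. One $\beta$-step (\ls{SBeta}, pure, empty trace) connects \ls{EApp}~$\cmp{P}$~$C^T$ to that substituted body. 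Here we use that $\cmp{P}$ is a \ls{ELam}, as forced by the value relation at the arrow type. Prepending this step to the resulting \ls{e_beh} derivation gives the IO relation for the linked program. Now assume $\bak{C^T}[P] \leadsto_{S} (\mathit{lt}, r)$, i.e.\ \ls{fs_beh} $(P\ \bak{C^T})$ $[]$ $\mathit{lt}$ $r$. The IO relation yields $e'$ with \ls{e_beh}~(\ls{EApp}~$\cmp{P}$~$C^T$)~$e'$~$[]$~$\mathit{lt}$ and $\ls{bool} \in (\mathit{lt}, r, e')$. The \ls{QBool} case forces $e'$ to be \ls{Etrue} when $r$ is true and \ls{Efalse} otherwise, which is exactly $C^T[\cmp{P}] \leadsto_{T} (\mathit{lt}, r)$.

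\textbf{Main obstacle.} The hard part is Step 2 (and, symmetrically, the IO cases of Step 1). Source-to-target compatibility in the IO relation requires destructing \ls{fs_beh} of a bind-shaped computation into \ls{fs_beh} facts about its pieces. Since \ls{fs_beh} is the strongest postcondition of \ls{theta}, there is no inductive structure to pattern-match on. The approach I would take first is to prove a single inversion lemma: \ls{fs_beh (io_bind m k) h lt r} implies $\exists\, \mathit{lt}_1\, \mathit{lt}_2\, x$ with $\mathit{lt} = \mathit{lt}_1 @ \mathit{lt}_2$, \ls{fs_beh m h lt1 x}, and \ls{fs_beh (k x) (h++lt1) lt2 r}. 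The proof would instantiate the universally quantified postcondition $p$ with the predicate ``such a split exists''. It then shows, by induction on the \ls{io} tree $m$, that \ls{theta (io_bind m k)} satisfies this postcondition, using monotonicity of \ls{hist} and the monad-morphism law for \ls{theta}. With this lemma, the back-translation cases for effectful application and for the IO operations reduce to stitching \ls{steps} derivations together via the congruence rules and \ls{STrans}. One further point needs care: the relation on environments must be closed under history extension, as used in the Kripke-style arrow cases.
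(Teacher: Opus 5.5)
Your proposal is correct and is essentially the paper's argument. The paper proves this direction dually to Theorem~\ref{thm:rrhp-right}: it combines $P$~\ls{valid_member_of}~$\cmp{P}$ (the source-to-target fundamental property), $\bak{C^T}$~\ls{valid_member_of}~$C^T$ (from the verified back-translation, using that $C^T$ is a value), and the fact that both linkings are function application, which are exactly your Steps 1--3, with your unfolding of the \ls{QArrIO} case plus one \ls{SBeta} step just spelling out the paper's ``since our programs are functions'' step.
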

\begin{proof}[Proof sketch]
  This is dual to the proof of the \autoref{thm:rrhp-right}, just this time
  using the logical relation that relates each source behavior to a target behavior.
\end{proof}

\section{Adding Support for Refinement Types}\label{sec:refinements}
\newtext{Until now, we have focused on the extraction of simply typed IO programs.
\fstar{} also has refinement types, which allow one to add
logical constraints to types---\EG one can define the type of 
positive 8-bit integers by refining the type of natural numbers as
\ls{x:nat\{x <= 255\}}.
In this section we show how to extend \seiostar{}
to support IO programs, whose types are simple types, 
possibly decorated with refinements.

\subsection{Relational Quotation for Refinements}
Using refinements we can define \ls{nat8} and a safe increment function that requires
a proof that the number will not overflow. This can be done by refining the
argument of the function like this:}
\begin{lstlisting}
type nat8 = x:nat{x <= 255}
let incr_nat8 (x:nat8{x + 1 <= 255}) : nat8 = x + 1
\end{lstlisting}
\newtext{
Now to call function \ls{incr_nat8}, one has to prove at the call site that 
\ls{x + 1 <= 255}. The way \fstar{} type-checks a top-level definition
that uses refinements is by computing a weakest precondition, 
which usually is automatically proven by SMT.
This weakest precondition 
collects all the refinements, thus, if one proves the weakest precondition,
one knows that all the refinements are satisfied.
For example, for function \ls{incr_nat8}, the weakest precondition looks conceptually like
\ls{forall (x:nat). x + 1 <= 255 ==> incr_nat8 x <= 255}.}

\ca{
  Maybe we should not call it a specification environment, but some kind
  of global pre-condition? Pre-condition environment?}

\newtext{
In order to handle refinements,
we extend the typing relation to mimic what \fstar{} does---\IE
the typing relation also computes a precondition.
For that, we index the typing relation in a precondition~\ls{preG}
that is a predicate on the evaluation environment.
The precondition must act on the evaluation environment to be able to
refer to variables in the environment,
namely when dealing with \(\lambda\)-abstractions.
Precondition~\ls{preG} is used to refine the type of open \fstar{} values
by specifying that an open \fstar{} value
accepts only evaluation environments that satisfy the precondition.}

\begin{lstlisting}
type typing_oval : tyG:typ_env -> @@preG:(eval_env tyG -> Type0)@@ -> a:Type -> (@@fsG:@@(eval_env tyG)@@{preG (fsG)}@@ -> a) -> Type
\end{lstlisting}
\newtext{
The prior typing rules stay unchanged except for the fact that now they
have to compute the precondition, which they do in the
expected manner, similar to what \fstar{} does.}
\begin{lstlisting}
| Qfalse : #tyG:typ_env -> typing tyG @@(fun _ -> True)@@ bool (fun _ -> false)
| QApp : #tyG:typ_env->#a:Type->#b:Type -> 
        @@#preG$\textcolor{codecol}{_f}$:(eval_env tyG -> Type0) ->@@ #f:(@@fsG:@@(eval_env tyG)@@{preG$\textcolor{codecol}{_f}$ (fsG)}@@ -> a -> b) ->
        @@#preG$\textcolor{codecol}{_x}$:(eval_env tyG -> Type0) ->@@ #x:(@@fsG:@@(eval_env tyG)@@{preG$\textcolor{codecol}{_x}$ (fsG)}@@ -> a) ->
        typing tyG @@preG$\textcolor{codecol}{_f}$@@ (a -> b) f ->
        typing tyG @@preG$\textcolor{codecol}{_x}$@@ a x ->
        typing tyG @@(fun fsG -> preG$\textcolor{codecol}{_f}$ (fsG) /\ preG$\textcolor{codecol}{_x}$ (fsG))@@ b (fun fsG -> (f fsG) (x fsG))
| QLambda : #tyG:typ_env -> #a:Type -> #b:Type -> 
            @@#preG$\textcolor{codecol}{_b}$:(eval_env (extend a tyG) -> Type0)@@ -> #body:(@@fsG:@@(eval_env (extend a tyG))@@{preG$\textcolor{codecol}{_b}$ (fsG)}@@ -> b) ->
            typing (extend a tyG) @@preG$\textcolor{codecol}{_b}$@@ b body ->
            typing tyG @@(fun fsG -> forall (x:a). preG$\textcolor{codecol}{_b}$ (push fsG x))@@ (a -> b) (fun fsG x -> body (push fsG x))
\end{lstlisting}
\newtext{
We add only one typing rule specific to refinements, called \ls{QRef},
that allows us to subtype a value of type \ls{x:'a\{p x\}} to
type \ls{x:'a\{q x\}} as long as one can prove \ls{q x}.
This rule is sufficient because \fstar{}
treats all types as containing the trivial refinement---\IE type \ls{'a} and \ls{x:'a\{True\}} are equivalent.
}

\needspace{7\baselineskip}
\begin{lstlisting}
| QRef : #tyG:typ_env -> #a:Type -> 
        #preG$_v$:(eval_env tyG -> Type0) ->
        #p:(a -> Type0)
        #v:(fsG:(eval_env tyG){preG$_v$ (fsG)} -> x:a{p x}) ->
        typing tyG preG$_v$ (x:a{p x}) v ->
        #q:(a -> Type0)
        typing tyG (fun fsG -> preG$_v$ (fsG) /\ q (v fsG)) (x:a{q x}) (fun fsG -> v fsG)
\end{lstlisting}
\newtext{
  We can use the typing rule manually to change the refinement of a value when it is necessary.
  For example, for the \ls{incr_nat8} function, we can use it like this:}
\begin{lstlisting}
let incr_nat8_typing : typing empty _ (x:nat8{x+1<= 255} -> nat8) (fun _ -> incr_nat8) =
  QLambda (QRef (QSucc (QRef QAxiom)))
\end{lstlisting}
\newtext{
The second \ls{QRef} is used to erase the refinement on \ls{x} because the successor
function expects just a natural number.
The first \ls{QRef} is used to refine \ls{x+1} with the fact that it is smaller than~255.
The precondition of this derivation is inferred automatically because 
it is computed bottom-up given the structure of the typing rules.}

\newtext{
  During quotation, the metaprogram is in a challenging situation because
  \fstar{}~transparently subtypes values between refinements, 
  which means there is no syntactic marker to know when to use the \ls{QRef} rule.
  Therefore, the metaprogram inserts \ls{QRef} 
  in all locations where typing information comes from the environment.
  Compared to the other rules, the metaprogram sometimes spells out some of the
  implicits of the rule \ls{QRef} using the information it gets from the environment,
  but otherwise, the metaprogram works the same as described in~\autoref{sec:metaprogram}.
  \ca{what about unification and \fstar{} being able to infer all implicits?}
}
\tbd{explain that this typing rule is inserted by the metaprogram}
  \tw{My intuition is that you use QRef around checking positions (where the typing information comes from the environment), in other words, when the conversion rule is needed.}
  \ch{this is
made challenging by the fact that in \fstar{} logical facts from the context
(\EG refinements on arguments or function pre-conditions) can be used
automatically by the SMT solver, without any syntactic marker in the program.}

\newtext{\paragraph{Refined Functions} With the support for refinements we just added to \seiostar{},
we can also refine functions. For example, we can refine \ls{incr_nat8} like this:}
\begin{lstlisting}
let incr_nat8' : @@f@@:(x:nat8{x + 1 <= 255} -> nat8)@@{forall x. f x == x + 1}@@ = incr_nat8
\end{lstlisting}
\newtext{
The refinement states that for all inputs, \ls{incr_nat8} increments them.
This refinement could be written more idiomatically in \fstar{} as
\ls{x:nat8\{x + 1 <= 255\} -> y:nat8\{y == x+1\}}, but for that we would need support
for dependent types, which is left as future work~(\autoref{sec:future}).
}


\subsection{Formal Secure Extraction of Refinements}

\newtext{
  To have the same formal guarantees when extracting IO code with refinements,
  we have to make minimal changes to the compilation model from~\autoref{sec:secure-compilation}
  and the logical relations~(\autoref{sec:logical-relations}).

  First, we have to update our logical relations.
  The logical relations for closed expressions remain unchanged,
  since the \iostar{} expression is already typed at a refinement type,
  which makes the relation imply that the \lambdaio{} expression also satisfies the refinement.
  The logical relations for open expression have to be updated to account for
  the fact that now open \fstar{} values need the precondition to be satisfied.
  Therefore, we changed it so that an open \iostar{} and \lambdaio{} expressions are related
  {\em only for evaluation environments that satisfy the precondition}.
  We present the modified target-to-source logical relation below. The same change is 
  expressed in the source-to-target logical relation.}
\begin{lstlisting}
let (⊐) (#tyG:typ_env) (qt:qType) @@(#preG:eval_env tyG -> Type0)@@ (fs_oe:(@@fsG:@@(eval_env tyG)@@{preG (fsG)}@@ -> qt._1)) (oe:exp) =
  fv_in_env tyG oe /\
  forall b (eGs:gsub tyG b) (fsGs:eval_env tyG) (h:history). (fsGs `(∽) h` eGs @@/\ preG (fsGs)@@) ==> t ⊇ (h, fs_oe fsG, gsubst eGs oe)
\end{lstlisting}
\newtext{This modification does not affect the compatibility lemmas in a
meaningful way because the 
precondition shows up on the left side of the implication.
It affects, however, our proof of compilation.
Before, the proof first established that the \iostar{} program and 
the generated \lambdaio{} program are in the relation for open expressions,
and then it concluded that they are related in the relation for closed expressions
by using the empty evaluation environment.
Now, one has to do the extra step to prove that the precondition is satisfied for
the empty evaluation environment.
This is why we had to modify our model slightly by requiring that
the source program has a satisfiable precondition.
The modified model supports partial programs that possibly use refinement types 
and have a simply typed interface.}
\begin{lstlisting}
type progS (i:interface) =
  ps:(i.ct->io bool) & @@preG:(eval_env empty -> Type0){preG empty_eval} & @@(typing empty @@preG@@$\ $(i.ct->io bool) ps)
\end{lstlisting}
\newtext{
Unfortunately, we cannot build a proof that the precondition is satisfied using the metaprogram
because \fstar{} saves neither the precondition it builds nor a proof that
the precondition is satisfiable.
Since the main way to prove the preconditions in \fstar{} is by using the SMT,
we tried to compute the same preconditions as \fstar{} so that
if the SMT succeeded initially when the program was type-checked, it should
also succeed when quoting, which seems to work well in practice.
It is worth noting that related work on certifying extraction
runs into a similar issue~\cite{cakeml-pure-synthesis,cakeml-effectful-synthesis}.}

\section{\seiostar{} in Action}

\label{sec:seio-in-action}

This section provides more details on how to run the running example.
In \autoref{sec:case-study}, we first present a
concrete \ls{validate} function and agent definitions that make this
\ls{validate} hold or not. Later, in \autoref{sec:further-compilation}, we
briefly explain how to connect our framework with other tools to
obtain an executable file that allows us to test \seiostar{} in
action.

\subsection{A Verified Wrapper for Unverified AI Agents}

\label{sec:case-study}

We tested our framework with the verified wrapper for agents we
presented in \autoref{sec:key-ideas}. For the \ls{validate}
construction we used a very simple condition:
\begin{lstlisting}
let validate olds task news = eq_string task news
\end{lstlisting}
that tests whether the task description coincides with the new content
of the file. The function \ls{eq_string} is assumed as a primitive
constant of \lambdaio{}. Other string operations, such as
concatenation, could be added and treated in a similar fashion.

The verified wrapper can be run together with agents written directly
as \lambdaio{} expressions. Even when the primitives present currently
in \lambdaio{} are limited, we still tested a number of things. These are
some agents we have tried that work as intended:
\begin{lstlisting}
(* bad: this agent does nothing *)
let lazy_agent : exp = ELam (ELam EUnit)

(* good: this agent writes the expected string on the file *)
let write_agent : exp =
  ELam (ELam (ECase (EOpen (EVar 1)) (EApp (ELam (EClose (EVar 1))) (EWrite (EVar 0) (EVar 1))) EUnit ))

(* bad: this agent writes the string to the file twice *)
let write_twice_agent : exp =
  ELam (ELam (ECase (EOpen (EVar 1))
                    (EApp (ELam (EApp (ELam (EClose (EVar 2)))
                          (EWrite (EVar 1) (EVar 2))))
                          (EWrite (EVar 0) (EVar 1))) EUnit))
\end{lstlisting}

\subsection{Unverified Compilation Step from \lambdaio{} to \lambdabox{}}
\label{sec:further-compilation}

\begin{figure}[h]
\centering
\begin{tikzpicture}[
  auto,
  node distance=2.4cm
]

\tikzstyle{block} = [
  draw,
  rectangle,
  minimum height=2.6em,
  minimum width=4.5em,
  align=center
]

\node[block, rounded corners] (fst)
  {\small\textit{\lambdaio{}}\\[-2pt]\small\texttt{.fst}};

\node[block, rounded corners, right of=fst] (ast)
  {\small\textit{\lambdabox{}}\\[-2pt]\small\texttt{.ast}};

\node[block, rounded corners, right of=ast] (mlf)
  {\small\textit{Malfunction}\\[-2pt]\small\texttt{.mlf}};

\node[block, rounded corners, right of=mlf] (cmx)
  {\small\textit{OCaml obj.}\\[-2pt]\small\texttt{.cmx}};

\node[block, rounded corners, right of=cmx] (exe)
  {\small\textit{Executable}};

\node[block, rounded corners, below of=cmx, node distance=1.4cm] (axioms)
  {\small\textit{IO runtime}\\[-2pt]\small\texttt{axioms.cmx}};

\node[block, rounded corners, below of=mlf, node distance=1.4cm] (axiomsml)
  {\small\textit{IO runtime}\\[-2pt]\small\texttt{axioms.ml}};

\draw[->] (fst) -- (ast);
\draw[->] (ast) -- (mlf);
\draw[->] (mlf) -- (cmx);
\draw[->] (cmx) -- (exe);
\draw[->] (axiomsml) -- (axioms);
\draw[->] (axioms) -- (exe);
\end{tikzpicture}
\vspace{0.5em}
\caption{Compilation pipeline from \lambdaio{} to executable.}
\label{fig:exec-pipeline}
\end{figure}

To build an actual executable from the programs we can write
in \seiostar{}, we complement the proposed extraction mechanism with
an unverified step compiling from \lambdaio{} to \lambdabox{}
(\emph{lambdabox}), and then using the tooling provided by the
Peregrine project~\cite{Peregrine} to compile to OCaml
(Malfunction~\cite{coq-to-ocaml,dolan2016malfunctional}) to obtain code that can be later
linked with a runtime in OCaml.

The \lambdabox{} language was introduced in the context of program
extraction for Rocq~\cite{Letouzey04:thesis}. It is a variant
of untyped $\lambda$-calculus with a construction $\square$
(box) that replaces proofs and types that are computationally
irrelevant. In our use case, we do not erase proofs, as \lambdaio{}
already does not contain any, and we use it just as an intermediate language
that allows us to use the rest of the compilation pipeline provided by Peregrine.

Using Peregrine, the \lambdabox{} code is translated to Malfunction,
which is a low-level untyped program representation that can be used
to compile to OCaml objects. The Peregrine framework uses this program
representation as its output for OCaml. This output can later be
compiled using Malfunction tooling~\cite{dolan2016malfunctional}\ch{no
clue what's binary this is this talking about}\er{replaced with
tooling and give ref.}  to generate OCaml objects, which can be linked
with other OCaml objects that implement IO primitives such
as \ls{openfile}, \ls{readfile}, etc. as well as other primitives
of \lambdabox{} such as string equality. This compilation and linking
sequence is presented in \autoref{fig:exec-pipeline}.\tw{repetition}

\section{Related Work}
\label{sec:related}

\ca{TODO: Intro paragraph. There is something in the comments}



\ca{
  We should discuss maybe again about what this means
  and exactly what is the plan to represent OCaml
  computations in \fstar{}.
  Maybe Remy Seassau work is relevant to us:
  \url{https://remyjck.github.io/papers/osiris.pdf}.
}

\paragraph{\bf{Certifying Extraction}}
Work on {\em CakeML synthesis}~\cite{cakeml-pure-synthesis,cakeml-effectful-synthesis}
shows how starting from a shallowly embedded program in HOL one
can synthesize an equivalent CakeML~\cite{cakeml} program (a subset of Standard ML).
Their synthesis happens outside of HOL, in Standard ML, and it
produces a deeply embedded program in HOL and a correctness proof
that has to be accepted by HOL, which is a form of translation valuation.
They show how to synthesize CakeML programs from HOL functions
that feature polymorphism, first-order references, arrays, exceptions, and IO
operations.\ch{Do they also use monads for the effects?}
We think we can extend \seiostar{} with these features by following
their lead to obtain certifying extraction from \fstar{}
with a smaller TCB\ch{why smaller? shouldn't forget that \fstar{} has a huge TCB and HOL a tiny one}
and stronger security guarantees
than in their work~\cite{cakeml-pure-synthesis,cakeml-effectful-synthesis}.
Another difference comes from the languages we use,
they work in HOL, a simply typed language,
while we work in \fstar{}, a dependently typed language,
which enabled us to factor out relational quotation as a separate step.

{\em Œuf~\cite{oeuf}} is a compiler with end-to-end guarantees
from a simply typed subset of Gallina (the language of Rocq) to Assembly code.
They use translation validation for quotation, which
produces a deeply embedded program.
They validate this
by showing in Rocq that the denotation of the deep embedding is equal to the Gallina program.
They define denotation as a pure Rocq function that takes a deeply embedded program and produces
a shallowly embedded one.
This is similar to what happens in relational quotation, where the typing derivation also
constructs a program that is checked to be equal with the program that is quoted.
From our experience, using a typing relation works better than
separately defining a deep embedding and a denotation function.
For example, it is more natural to keep in the
typing derivation extra information necessary to
reconstruct the shallow embedding,
information which may have no place in a standard deep embedding.
We used relational quotation for a language with the IO effect,
while Œuf has no support for effects.
\ca{what about their shims? They do not seem so interesting}

{\em \citet{certextraction-coq}} present a certifying extraction
from pure Rocq functions of simple polymorphic types to an
untyped call-by-value \(\lambda\)-calculus.
They present a plugin for Rocq (a metaprogram) that given a shallowly embedded
program, produces a deeply embedded program.
The plugin comes with a series of tactics that automate the
process of validating that the two programs are related.
They have no support for effects, while \seiostar{} supports IO.

{\em \citet{rupicola}} present a Rocq framework
that treats compilation as a proof-search problem---\IE
finding a deeply embedded low-level program that is in a relation with the
shallowly embedded program.
This is also a form of translation validation, since the search is done using
tactics (metaprograms) whose result has to be type-checked by Rocq.
The framework may not know how to automatically compile a program, in which case it
shows where it got stuck, and the proof engineer can take control.
They illustrate the versatility of the framework by showing how to 
compile Gallina programs using
terminating loops,
various flat data structures such as mutable first-order cells and arrays,
nondeterminism, and IO.
\citet{rupicola} present their framework as
translation validation tool for specific performance-critical programs,
directly targetting a low-level program.\ch{May need to clarify what this last part means.}
We on the other hand try to minimize 
the reliance on translation validation.\ch{
    only difference?\ca{this is the biggest difference}}\ch{Do they do secure compilation?}

The work above and our relational quotation {\em share a challenge}
that comes from trying to relate a shallowly embedded program
with a deeply embedded program:
one is able to
relate only what is expressible by a user defined relation
(\EG pattern matching is hard to do: when synthesizing CakeML~\cite{cakeml-effectful-synthesis,cakeml-pure-synthesis},
    they first automatically generate the relation based on the datatypes used).
\ifsooner
\tbd{Second, because of refinements we recompute
the weakest precondition of a program, which one has to reprove (using the SMT or manually).
When doing synthesis of CakeML~\cite{cakeml-pure-synthesis,cakeml-effectful-synthesis},
they also end up having extra pre-conditions they have to prove and
they tackle it by having some automatic heuristics to discharge parts of
the necessary proofs, and the user is left to tackle the remaining ones manually.
\ca{what about Clement/Ouef?}}
\fi
\newremove{
We came up with the {\em typing relation for shallow embeddings} after reading work
on certifying extraction,
but, to our knowledge, it does not exist per se in literature.
}

\newtext{
  \paragraph{\bf{Typing Relation for Shallow Embeddings}}
  {\em \citet{mcbride-coincidences}} presents the first
  deep embedding indexed by shallow types, and claims it
  is a ``new reflective technology''.
  McBride's idea was advanced 
  by~{\em\citet{deeper-shallow-embeddings}},
  who show how to do a ``deeper shallow embedding''
  of a dependently-typed language by using an embedding
  very similar to our typing relation for shallow embeddings.
  They show an embedding that supports dependent types and/or affine types,
  something we did not try. We show, however, how to support monadic code and refinement types.
  To our knowledge, we are the first to use such a typing relation for
  doing quotation and certifying extraction.

  {\em \citet{embedding-by-unembedding}} propose a framework that leverages unembedding,
  a reification technique that converts
  finally-tagless (shallow) representations to de Bruijn-indexed (deep) terms, with
  formal correctness guarantees.
  Unembedding does not apply to our monadic shallowly
  embedded code because it is not in finally-tagless style.

  Some similarities exist also with work on canonical structures~\cite{gonthier2013make}.

}

\paragraph{\bf{Logical Relations}}
Our logical relations,
and their related proofs,
follow the logical relation for semantic typing presented by \citet{ahmed2004semantics} (\S
2.2) for a \(\lambda\)-calculus extended with dynamically allocated immutable cells.
We differ from it in three ways:
our logical relations are cross-language~\cite{NeisHKMDV15, Ahmed15}
  and asymmetric, connecting a shallow and a deep embedding.
They relate a program of a FGCBV language to one of a CBV language, and 
relates two trace-producing semantics.\ca{we also differ because we do IO, but
  somehow it does not seem very relevant. maybe it is covered by the 
  trace-producing semantics?}\ch{The ``trace-producing semantics'' is mentioned 100 times,
  but it's relevance {\bf never} explained. To me it is the most standard way to give an operational semantics for IO.
    Why are we making such a fuss about something standard?
    Here in related work (comparison with CakeML?) is the place where it should be explained, and if it's standard
    then also only place where it needs to be mentioned.}

Asymmetric relations are common in work on certified extraction,
and \citet{certextraction-coq} and
work on CakeML synthesis~\cite{cakeml-effectful-synthesis,cakeml-pure-synthesis}
also use logical relations.
The most complex relation is used in CakeML synthesis:
it supports multiple effects (first-order state, exceptions and IO), polymorphic functions and
total recursive functions,
and relates evaluation semantics.\ch{unclear to me what distinction you're trying to
  make here between evaluation and trace-producing semantics}
Our logical relations are different because
they relate trace-producing semantics\ch{unclear why this is not the same in CakeML, which also does IO}
and it implies the strongest secure compilation criterion (RrHP) of~\citet{AbateBGHPT19}.
  
Logical relations were used to prove fully abstract compilation~\cite{NewBA16,
  DevriesePP16, MarcosSurvey} and \citet{AbateBGHPT19} later show that they can also
be used to prove their stronger RrHP secure compilation criterion for a simple
translation from a statically typed language to a dynamically typed one.

\paragraph{\bf{Verified Extraction}}
CakeML~\cite{cakeml} is a subset
of Standard ML that has a verified compiler that can
target ARMv6, ARMv8, x86-x64, MIPS-64, RISC-V and Silver RSA
architectures.\ch{What's the connection to {\em extraction} (title of this heading)?
  This is just a traditional verified compiler, like CompCert.
  Do I need to read the {\em next} paragraph to understand why you're telling me all this?
  How about you start with the interesting part?}
Together with their synthesis from HOL to CakeML, they have 
end-to-end guarantees of correctness from the original program
written in HOL to the low-level code that targets a specific
architecture.
It would be interesting to extend \seiostar{} to target
CakeML with a verified secure compilation step.

\citet{Isabelle2CakeML} and \citet{verified_dafny_extraction} present verified extraction from Isabelle/HOL and Dafny to
CakeML, but there is no formal connection between the
formalized meta-theory of Isabelle/HOL or Dafny and the languages themselves.\ch{No
  clue what the part after ``but'' is trying to say. What languages and connection are you talking about?}

\citet{coq-to-ocaml} present a verified compiler
from Gallina to Malfunction (an intermediate untyped language of OCaml).
Their compiler is based on the existing MetaRocq~\cite{metacoq2} project
that formalizes Rocq in itself.
%
%
They prove a realizability relation that guarantees that the
extracted untyped code operationally behaves
as the initial Rocq program.
They do not support
effectful code, 
and quotation is neither verified nor certifying.

CertiCoq~\cite{certicoq17} and ConCert~\cite{concert-rocq, certicoq-wasm}
are two partially verified compilers for Rocq that target C, Elm, Rust, WebAssembly and
blockchain languages. They also rely on unverified quotation.

\paragraph{\bf{Secure Compilation of Verified Code}}

There are two secure compilation frameworks for \fstar{},
\sciostar{}~\cite{sciostar} for verified IO programs, 
and \secrefstar{}~\cite{secrefstar} for verified stateful programs.
\sciostar{} and \secrefstar{} look at how to
convert refinement types, pre- and post-conditions into dynamic
checks using higher-order contracts and a reference monitor.
\seiostar{} is orthogonal to them, since it focuses on
providing guarantees for the extraction step itself.
They are both proved to satisfy RrHP, but they are not fully verified:
they rely on the unverified extraction mechanism of \fstar{} to OCaml.
\newtext{Moreover, even if we prove the same criterion, the proofs are very different.
Their proof~\cite{sciostar,secrefstar} of RrHP follows directly
from a syntactic inversion law, while our \seiostar{} proof of RrHP is
semantic and involves two logical relations.
This difference is necessary, because while \sciostar{}/\secrefstar{} only
compiles between shallow embeddings, our \seiostar{} work extracts a shallow
embedding to a deeply embedded language with a standard operational semantics.}
Beyond such secure extraction from proof assistants, there
are many other applications of secure compilation~\cite{MarcosSurvey, AbateBGHPT19}.

\section{Conclusion and Future Work}
\label{sec:future}

We have shown that formally secure extraction of shallowly embedded effectful
programs can be achieved by using translation validation only for a first
step---relational quotation---while using a once-and-for-all verified syntax
generation function for the rest. We have illustrated this general idea for \fstar{} by
developing \seiostar{}, a framework that extracts \fstar{} programs with IO
while providing machine-checked secure compilation guarantees (RrHP).

\newtext{
\seiostar{} supports extraction of programs with file-based IO,
whose types are simple types, possibly decorated with refinements.
In the future, it will be interesting to address some of its limitations,
and add support for recursive functions and dependent types.
To quote recursive \fstar{} functions we would need
\fstar{} to expose the combinators used to define recursive functions.
For now, we defined an iterator for the specific inductive type of natural numbers
and showed how to extract functions written in terms of that iterator~\cite{oeuf}.
We think this method could scale to a user defined fixpoint
combinator~\cite{cakeml-effectful-synthesis,cakeml-pure-synthesis,rupicola},
but the combinator would still not be idiomatic and it would be
difficult to use compared to \fstar{}'s native support for recursion.
To support dependent types, we would have to modify
the typing relation to be closer to \citet{deeper-shallow-embeddings}
by having the \fstar{} types explicitly depend on the evaluation environment, just like we did for values.
We expect that quoting dependently-typed programs would
still follow the syntactic structure of the program and the
type checking would be similar to how it works for simply typed programs.
Redoing the proofs of security would require to also extend the logical relations.
Logical relations for dependent types tend to be much more involved and rely on strong meta-theoretical assumptions such as induction-recursion, ~\cite{abel2013normalization} on impredicativity~\cite{adjedj2024martin,jang2025mctt,liu2024functional}, or alternatively by considering only a finite number of universes~\cite{abel2017decidability,poiret2026divide}.
While this seems mostly orthogonal from the focus of our work, the intricacies
of logical relations prevent us from making a claim about the difficulty or
feasibility of the extension.}

Improving the performance of the relational quotation metaprogram for refinement types
is necessary before doing other extensions.
We show how to support refinement types by defining
a typing relation that computes a precondition and has one 
typing rule specific to refinements.
Having the precondition as an argument on each typing rule consumes a lot of memory
because the precondition grows with the size of the derivation.
The metaprogram also overuses the subtyping rule (\ls{QRef}),
by using it in all position where typing information comes from the environment,
but not all positions require subtyping.

In the future, we would like to connect \seiostar{} with \sciostar{}
~\cite{sciostar}, the previous secure compilation framework for verified
\fstar{} programs with IO, already discussed at the end of \autoref{sec:related} above.
This raises several challenges though:
\newremove{First, while \sciostar{} turns some of the refinement types and
pre-/post-conditions on the interface between program and context into dynamic
checks, the compiled programs produced by \sciostar{} can still make use of
these features (\EG internally).
We believe relational quotation can be extended to these features, but this is
made challenging by the fact that in \fstar{} logical facts from the context
(\EG refinements on arguments or function pre-conditions) can be used
automatically by the SMT solver, without any syntactic marker in the program.}%
First, \sciostar{} works around \fstar's lack of general effect polymorphism by
making use of a restricted form called flag polymorphism to model that the
context is prevented from directly calling the IO primitives to avoid the access
control done by a reference monitor.
Second, \sciostar{} makes use of \fstar's effect mechanism, and while effects in
\fstar{} can be reified to their underlying monadic
representation~\cite{relational}, support for reasoning about the obtained
monadic computations is at the moment still too limited to be able to do the
extrinsic proofs that would be needed to connect \sciostar{} and \seiostar{}.

It is worth noting that these two challenges are specific to \fstar{} and the
general idea of using relational quotation to split extraction into two distinct
phases could also be applied to more standard dependently typed proof assistants
like Rocq and Lean.
For instance, it would be interesting to combine and extend the ideas of both
\sciostar{} and \seiostar{} and apply them to Loom~\cite{GladshteinPZKS26}, a
recent verification framework for effectful Lean programs, which like \fstar{}
is based on Dijkstra monads~\cite{dm4all}.
This would also require extending relational quotation to more effects, but prior
work on CakeML synthesis~\cite{cakeml-pure-synthesis,
  cakeml-effectful-synthesis} and relational compilation~\cite{rupicola} suggest
this should be possible.


\tbd{
Certifying extraction can be used to achieve end-to-end verified extraction.
One example of that is the CakeML~\cite{cakeml} project:
the compiler for CakeML is verified in HOL (from parsing to pretty-printing), and certifying extraction ~\cite{cakeml-pure-synthesis,cakeml-effectful-synthesis}
(what they call synthesis)
is used to extract the verified compiler. The compiler then provides guarantees
not using translation validation, but by being end-to-end verified,
which we know it satisfies because of the certifying process.
Then to achieve end-to-end verified extraction, one has to 
formalize the meta-theory necessary to verify extraction,
and to scale 
certifying extraction to the subset used to formalize the meta-theory.}

\ca{Formalizing the meta-theory, scaling relation quotation to the features used,
  relate the 
  }

\ca{connect to lambda box? we already said in Related work>verified extraction about targeting an existing
  verified compilation chain}

\ca{create framework}

\ca{
  Can one certify extraction of MetaRocq? Challenges:
  \begin{itemize}

  \item Extend relational quotation to support quotation of MetaRocq.
     One needs dependent-types, what else?
     This may be easier than it sounds because we target just the Rocq subset in which
     MetaRocq is written into (probably a fixed number of universes?)
     and probably a fixed number of types and usages of pattern matching.
     We do not want to quote the proofs either, just the computational parts.
  \item We only have to certify the typing derivations
    (which can be manually written and proven if automation proves difficult).
    Initially, the certification would be done by the OCaml implementation of Rocq,
    but then one could re-certify with the verified extracted version (probably to slow \citet{coq-to-ocaml}).
  \item What to target? Can we reuse 
       verified extraction to Malfunction \citet{coq-to-ocaml}?
       Could we target PCUIC? Or lambda box after erasure?
  \item Establish a logical relation
     between the formalization and the extracted code
     that gives us correctness guarantees.
  \item Would the final statement be in terms of the logical relation?
  \end{itemize}
}

\ifanon\else
\begin{acks}
We are grateful to Guido Mart\'{i}nez, Yong Kiam Tan, and Jérémy Thibault for the insightful discussions.
We also thank the anonymous reviewers of PriSC'26 and ICFP'26 for their helpful feedback.
This work was in part funded by the \grantsponsor{1}{European Research Council}{https://erc.europa.eu/}
under \ifcamera\else ERC\fi{} Starting Grant SECOMP (\grantnum{1}{715753}),
by the German Federal Ministry of Education and Research BMBF (grant 16KISK038, project 6GEM),
and by the Deutsche Forschungsgemeinschaft (DFG, German Research Foundation)
under Germany's Excellence Strategy -- EXC 2092 CASA -- 390781972.
Exequiel Rivas was funded by the Estonian Research Council starting grant PSG749.
Danel Ahman was funded by the Estonian Research Council grant PRG2764.
\end{acks}
\fi


\ifcamera
\bibliographystyle{ACM-Reference-Format}
\else
\bibliographystyle{abbrvnaturl}
\fi

\bibliography{fstar}

\begin{thebibliography}{53}
\providecommand{\natexlab}[1]{#1}

\bibitem[Abate et~al.(2019)Abate, Blanco, Garg, Hri\cb{t}cu, Patrignani, and
  Thibault]{AbateBGHPT19}
C.~Abate, R.~Blanco, D.~Garg, C.~Hri\cb{t}cu, M.~Patrignani, and J.~Thibault.
\newblock \href {http://dx.doi.org/10.1109/CSF.2019.00025} {Journey beyond full
  abstraction: Exploring robust property preservation for secure compilation}.
\newblock \iflongrefs{In \emph{32nd {IEEE} Computer Security Foundations
  Symposium ({CSF})}}\else{\emph{CSF}}\fi{}. 2019.

\bibitem[Abel(2013)]{abel2013normalization}
A.~Abel.
\newblock \href {https://www.cse.chalmers.se/~abela/habil.pdf}
  {\emph{Normalization by evaluation: Dependent types and impredicativity}}.
\newblock Habilitation thesis, Ludwig-Maximilians-Universit{\"a}t M{\"u}nchen,
  2013.

\bibitem[Abel et~al.(2017)Abel, {\"O}hman, and Vezzosi]{abel2017decidability}
A.~Abel, J.~{\"O}hman, and A.~Vezzosi.
\newblock \href {http://dx.doi.org/10.1145/3158111} {Decidability of conversion
  for type theory in type theory}.
\newblock \emph{Proceedings of the ACM on Programming Languages}, 2\penalty0
  (POPL):\penalty0 1--29, 2017.

\bibitem[Abrahamsson et~al.(2020)Abrahamsson, Ho, Kanabar, Kumar, Myreen,
  Norrish, and Tan]{cakeml-effectful-synthesis}
O.~Abrahamsson, S.~Ho, H.~Kanabar, R.~Kumar, M.~O. Myreen, M.~Norrish, and
  Y.~K. Tan.
\newblock \href {http://dx.doi.org/10.1007/s10817-020-09559-8} {Proof-producing
  synthesis of {CakeML} from monadic {HOL} functions}.
\newblock \emph{Journal of Automated Reasoning ({JAR})}, 2020.

\bibitem[Adjedj et~al.(2024)Adjedj, Lennon-Bertrand, Maillard, P{\'e}drot, and
  Pujet]{adjedj2024martin}
A.~Adjedj, M.~Lennon-Bertrand, K.~Maillard, P.-M. P{\'e}drot, and L.~Pujet.
\newblock \href {http://dx.doi.org/10.1145/3636501.3636951} {Martin-l{\"o}f
  {\`a} la coq}.
\newblock In \emph{Proceedings of the 13th ACM SIGPLAN International Conference
  on Certified Programs and Proofs}, 2024.

\bibitem[Ahman and Uustalu(2013)]{ahman13update}
D.~Ahman and T.~Uustalu.
\newblock \href {http://dx.doi.org/10.4230/LIPIcs.TYPES.2013.1} {Update monads:
  Cointerpreting directed containers}.
\newblock \iflongrefs{In \emph{19th International Conference on Types for
  Proofs and Programs, {TYPES} 2013, April 22-26, 2013, Toulouse,
  France}}\else{\emph{TYPES}}\fi{}, 2013.

\bibitem[Ahman et~al.(2017)Ahman, Hri\cb{t}cu, Maillard, Mart\'inez, Plotkin,
  Protzenko, Rastogi, and Swamy]{dm4free}
D.~Ahman, C.~Hri\cb{t}cu, K.~Maillard, G.~Mart\'inez, G.~Plotkin, J.~Protzenko,
  A.~Rastogi, and N.~Swamy.
\newblock \href {http://dx.doi.org/10.1145/3009837.3009878} {Dijkstra monads
  for free}.
\newblock \iflongrefs{In \emph{44th ACM SIGPLAN Symposium on Principles of
  Programming Languages (POPL)}}\else{\emph{POPL}}\fi{}. \iflongrefs{Jan.
  }\fi{}2017.

\bibitem[Ahman et~al.(2018)Ahman, Fournet, Hri\cb{t}cu, Maillard, Rastogi, and
  Swamy]{preorders}
D.~Ahman, C.~Fournet, C.~Hri\cb{t}cu, K.~Maillard, A.~Rastogi, and N.~Swamy.
\newblock \href {http://dx.doi.org/10.1145/3158153} {Recalling a witness:
  Foundations and applications of monotonic state}.
\newblock \emph{{PACMPL}}, 2\penalty0 ({POPL}):\penalty0 65:1--65:30,
  \iflongrefs{jan }\fi{}2018.

\bibitem[Ahman et~al.(2026)Ahman, Bhargavan, Bond, Bosamiya, Brzuska,
  Delignat-Lavaud, Fournet, Fromherz, Gibson, Hawblitzel, Hri{\cb{t}}cu,
  Kohlweiss, Mart\'{\i}nez, Ni, Parno, Protzenko, Ramananandro, Rastogi, Rivas,
  Swamy, and Zanella-B\'{e}guelin]{everest-toplas26}
D.~Ahman, K.~Bhargavan, B.~Bond, J.~Bosamiya, C.~Brzuska, A.~Delignat-Lavaud,
  C.~Fournet, A.~Fromherz, S.~Gibson, C.~Hawblitzel, C.~Hri{\cb{t}}cu,
  M.~Kohlweiss, G.~Mart\'{\i}nez, H.~Ni, B.~Parno, J.~Protzenko,
  T.~Ramananandro, A.~Rastogi, E.~Rivas, N.~Swamy, and S.~Zanella-B\'{e}guelin.
\newblock \href {http://dx.doi.org/10.1145/3805702} {Project {Everest}:
  Perspectives from developing industrial-grade high-assurance software}.
\newblock \emph{ACM Trans. Program. Lang. Syst.}, 48\penalty0 (2),
  \iflongrefs{June }\fi{}2026.

\bibitem[Ahmed(2015)]{Ahmed15}
A.~Ahmed.
\newblock \href {http://dx.doi.org/10.4230/LIPIcs.SNAPL.2015.15} {Verified
  compilers for a multi-language world}.
\newblock \iflongrefs{In \emph{1st Summit on Advances in Programming
  Languages}}\else{\emph{SNAPL}}\fi{}. 2015.

\bibitem[Ahmed(2004)]{ahmed2004semantics}
A.~J. Ahmed.
\newblock \href {http://www.ccs.neu.edu/home/amal/ahmedthesis.pdf}
  {\emph{Semantics of Types for Mutable State}}.
\newblock PhD thesis, Princeton University, 2004.

\bibitem[Anand et~al.(2017)Anand, Appel, Morrisett, Paraskevopoulou, Pollack,
  Belanger, Sozeau, and Weaver]{certicoq17}
A.~Anand, A.~Appel, G.~Morrisett, Z.~Paraskevopoulou, R.~Pollack, O.~S.
  Belanger, M.~Sozeau, and M.~Weaver.
\newblock \href
  {https://popl17.sigplan.org/details/main/9/CertiCoq-A-verified-compiler-for-Coq}
  {{CertiCoq}: A verified compiler for {Coq}}.
\newblock In \emph{3rd Workshop on Coq for Programming Languages ({CoqPL})},
  2017.

\bibitem[Andrici et~al.(2024)Andrici, Ciob{\^{a}}c{\u{a}}, Hri{\cb{t}}cu,
  Mart{\'{\i}}nez, Rivas, Tanter, and Winterhalter]{sciostar}
C.-C. Andrici, {\c{S}}.~Ciob{\^{a}}c{\u{a}}, C.~Hri{\cb{t}}cu,
  G.~Mart{\'{\i}}nez, E.~Rivas, {\'{E}}.~Tanter, and T.~Winterhalter.
\newblock \href {http://dx.doi.org/10.1145/3632916} {Securing verified {IO}
  programs against unverified code in {F}$^{\star}$}.
\newblock \emph{Proc. {ACM} Program. Lang.}, 8\penalty0 ({POPL}):\penalty0
  2226--2259, 2024.

\bibitem[Andrici et~al.(2025)Andrici, Ahman, Hri\cb{t}cu, Icleanu,
  Mart\'{\i}nez, Rivas, and Winterhalter]{secrefstar}
C.-C. Andrici, D.~Ahman, C.~Hri\cb{t}cu, R.~Icleanu, G.~Mart\'{\i}nez,
  E.~Rivas, and T.~Winterhalter.
\newblock \href {http://dx.doi.org/10.1145/3747522} {{SecRef}$^{\star}$:
  Securely sharing mutable references between verified and unverified code in
  {F}$^{\star}$}.
\newblock \emph{Proc. ACM Program. Lang.}, 9\penalty0 (ICFP), \iflongrefs{Aug.
  }\fi{}2025.

\bibitem[Andrici et~al.(2026)Andrici, Pribisova, Ahman, Hri\cb{t}cu, Rivas, and
  Winterhalter]{andrici_2026_20534723}
C.-C. Andrici, A.~Pribisova, D.~Ahman, C.~Hri\cb{t}cu, E.~Rivas, and
  T.~Winterhalter.
\newblock \href {http://dx.doi.org/10.5281/zenodo.20534723} {Misquoted no more:
  Securely extracting f* programs with io - artifact}, \iflongrefs{June
  }\fi{}2026.

\bibitem[Annenkov et~al.(2022)Annenkov, Milo, Nielsen, and
  Spitters]{concert-rocq}
D.~Annenkov, M.~Milo, J.~B. Nielsen, and B.~Spitters.
\newblock \href {http://dx.doi.org/10.1017/S0956796822000077} {Extracting
  functional programs from {Coq}, in {Coq}}.
\newblock \emph{Journal of Functional Programming}, 32:\penalty0 e11, 2022.

\bibitem[Carneiro(2024)]{DBLP:journals/corr/abs-2403-14064}
M.~Carneiro.
\newblock \href {http://dx.doi.org/10.48550/ARXIV.2403.14064} {{Lean4Lean}:
  Towards a formalized metatheory for the {Lean} theorem prover}.
\newblock \emph{CoRR}, abs/2403.14064, 2024.

\bibitem[Chapman et~al.(2026)Chapman, Dima, Escot, Forster, Melkonian, Nielsen,
  Sozeau, and Spitters]{Peregrine}
J.~Chapman, S.~Dima, L.~Escot, Y.~Forster, O.~Melkonian, E.~Nielsen, M.~Sozeau,
  and B.~Spitters.
\newblock \href {https://peregrine-project.github.io} {Peregrine project: a
  unified middle-end for code generation from proof assistants}, 2026.

\bibitem[Cohen and Johnson-Freyd(2024)]{10.1145/3632902}
J.~M. Cohen and P.~Johnson-Freyd.
\newblock \href {http://dx.doi.org/10.1145/3632902} {A formalization of core
  {Why3} in {Coq}}.
\newblock \emph{Proc. ACM Program. Lang.}, 8\penalty0 (POPL), \iflongrefs{Jan.
  }\fi{}2024.

\bibitem[Devriese et~al.(2016)Devriese, Patrignani, and Piessens]{DevriesePP16}
D.~Devriese, M.~Patrignani, and F.~Piessens.
\newblock \href {http://dx.doi.org/10.1145/2914770.2837618} {Fully-abstract
  compilation by approximate back-translation}.
\newblock \iflongrefs{In \emph{43nd Annual {ACM} {SIGPLAN-SIGACT} Symposium on
  Principles of Programming Languages}}\else{\emph{POPL}}\fi{}, 2016.

\bibitem[Dolan(2016)]{dolan2016malfunctional}
S.~Dolan.
\newblock \href {https://stedolan.net/talks/2016/malfunction/malfunction.pdf}
  {Malfunctional programming}.
\newblock In \emph{ML Family Workshop 2016}, 2016.

\bibitem[Forster and Kunze(2019)]{certextraction-coq}
Y.~Forster and F.~Kunze.
\newblock \href {http://dx.doi.org/10.4230/LIPICS.ITP.2019.17} {A certifying
  extraction with time bounds from {Coq} to call-by-value lambda calculus}.
\newblock In J.~Harrison, J.~O'Leary, and A.~Tolmach, editors, \emph{10th
  International Conference on Interactive Theorem Proving, {ITP} 2019,
  Portland, OR, USA, September 9-12, 2019}. 2019.

\bibitem[Forster et~al.(2024)Forster, Sozeau, and Tabareau]{coq-to-ocaml}
Y.~Forster, M.~Sozeau, and N.~Tabareau.
\newblock \href {http://dx.doi.org/10.1145/3656379} {Verified extraction from
  {Coq} to {OCaml}}.
\newblock \emph{Proc. {ACM} Program. Lang.}, 8\penalty0 ({PLDI}):\penalty0
  52--75, 2024.

\bibitem[Gladshtein et~al.(2026)Gladshtein, P{\^{\i}}rlea, Zhao, Kurin, and
  Sergey]{GladshteinPZKS26}
V.~Gladshtein, G.~P{\^{\i}}rlea, Q.~Zhao, V.~Kurin, and I.~Sergey.
\newblock \href {http://dx.doi.org/10.1145/3776719} {Foundational multi-modal
  program verifiers}.
\newblock \emph{Proc. {ACM} Program. Lang.}, 10\penalty0 ({POPL}):\penalty0
  2233--2264, 2026.

\bibitem[Gonthier et~al.(2013)Gonthier, Ziliani, Nanevski, and
  Dreyer]{gonthier2013make}
G.~Gonthier, B.~Ziliani, A.~Nanevski, and D.~Dreyer.
\newblock \href {http://dx.doi.org/10.1145/2034773.2034798} {How to make ad hoc
  proof automation less ad hoc}.
\newblock \emph{Journal of Functional Programming}, 23\penalty0 (4):\penalty0
  357--401, 2013.

\bibitem[Grimm et~al.(2018)Grimm, Maillard, Fournet, Hri\cb{t}cu, Maffei,
  Protzenko, Ramananandro, Rastogi, Swamy, and
  {Zanella-B\'eguelin}]{relational}
N.~Grimm, K.~Maillard, C.~Fournet, C.~Hri\cb{t}cu, M.~Maffei, J.~Protzenko,
  T.~Ramananandro, A.~Rastogi, N.~Swamy, and S.~{Zanella-B\'eguelin}.
\newblock \href {http://dx.doi.org/10.1145/3167090} {A monadic framework for
  relational verification: Applied to information security, program
  equivalence, and optimizations}.
\newblock \iflongrefs{In \emph{The 7th ACM SIGPLAN International Conference on
  Certified Programs and Proofs}}\else{\emph{CPP}}\fi{}, \iflongrefs{Jan.
  }\fi{}2018.

\bibitem[Hupel and Nipkow(2018)]{Isabelle2CakeML}
L.~Hupel and T.~Nipkow.
\newblock \href {http://dx.doi.org/10.1007/978-3-319-89884-1\_35} {A verified
  compiler from {Isabelle/HOL} to {CakeML}}.
\newblock In A.~Ahmed, editor, \emph{Programming Languages and Systems - 27th
  European Symposium on Programming, {ESOP} 2018, Held as Part of the European
  Joint Conferences on Theory and Practice of Software, {ETAPS} 2018,
  Thessaloniki, Greece, April 14-20, 2018, Proceedings}. 2018.

\bibitem[Jang et~al.(2025)Jang, Gaulin, Hu, and Pientka]{jang2025mctt}
J.~Jang, A.~Gaulin, J.~Z. Hu, and B.~Pientka.
\newblock \href {http://dx.doi.org/10.1145/3747511} {Mctt: A verified kernel
  for a proof assistant}.
\newblock \emph{Proceedings of the ACM on Programming Languages}, 9\penalty0
  (ICFP):\penalty0 190--221, 2025.

\bibitem[Kumar et~al.(2014)Kumar, Myreen, Norrish, and Owens]{cakeml}
R.~Kumar, M.~O. Myreen, M.~Norrish, and S.~Owens.
\newblock \href {http://dx.doi.org/10.1145/2535838.2535841} {{CakeML}: a
  verified implementation of {ML}}.
\newblock In \emph{The 41st Annual {ACM} {SIGPLAN-SIGACT} Symposium on
  Principles of Programming Languages, {POPL}}. 2014.

\bibitem[Letouzey(2004)]{Letouzey04:thesis}
P.~Letouzey.
\newblock \href {https://tel.archives-ouvertes.fr/tel-00150912}
  {\emph{Programmation fonctionnelle certifi{\'{e}}e : L'extraction de
  programmes dans l'assistant Coq. (Certified functional programming: Program
  extraction within Coq proof assistant)}}.
\newblock PhD thesis, University of Paris-Sud, Orsay, France, 2004.

\bibitem[Levy et~al.(2003)Levy, Power, and Thielecke]{fgcbv}
P.~B. Levy, J.~Power, and H.~Thielecke.
\newblock \href {http://dx.doi.org/10.1016/S0890-5401(03)00088-9} {Modelling
  environments in call-by-value programming languages}.
\newblock \emph{Inf. Comput.}, 185\penalty0 (2):\penalty0 182--210, 2003.

\bibitem[Liesnikov and Cockx(2024)]{DBLP:conf/aplas/LiesnikovC24}
B.~Liesnikov and J.~Cockx.
\newblock \href {http://dx.doi.org/10.1007/978-981-97-8943-6\_4} {Building a
  correct-by-construction type checker for a dependently typed core language}.
\newblock In O.~Kiselyov, editor, \emph{Programming Languages and Systems -
  22nd Asian Symposium, {APLAS} 2024, Kyoto, Japan, October 22-24, 2024,
  Proceedings}. 2024.

\bibitem[Liu et~al.(2024)Liu, Chan, and Weirich]{liu2024functional}
Y.~Liu, J.~Chan, and S.~Weirich.
\newblock \href {https://electriclam.com/papers/mltt.pdf} {Functional pearl:
  Short and mechanized logical relation for dependent type theories}.
\newblock 2024.

\bibitem[Maillard et~al.(2019)Maillard, Ahman, Atkey, Mart\'{\i}nez,
  Hri\cb{t}cu, Rivas, and Tanter]{dm4all}
K.~Maillard, D.~Ahman, R.~Atkey, G.~Mart\'{\i}nez, C.~Hri\cb{t}cu, E.~Rivas,
  and E.~Tanter.
\newblock \href {http://dx.doi.org/10.1145/3341708} {Dijkstra monads for all}.
\newblock \iflongrefs{\emph{Proc. ACM Program.
  Lang.}}\else{\emph{PACMPL}}\fi{}, 3\penalty0 (ICFP), \iflongrefs{July
  }\fi{}2019.

\bibitem[Mart\'inez et~al.(2019)Mart\'inez, Ahman, Dumitrescu, Giannarakis,
  Hawblitzel, Hri\cb{t}cu, Narasimhamurthy, Paraskevopoulou, Pit-Claudel,
  Protzenko, Ramananandro, Rastogi, and Swamy]{metafstar}
G.~Mart\'inez, D.~Ahman, V.~Dumitrescu, N.~Giannarakis, C.~Hawblitzel,
  C.~Hri\cb{t}cu, M.~Narasimhamurthy, Z.~Paraskevopoulou, C.~Pit-Claudel,
  J.~Protzenko, T.~Ramananandro, A.~Rastogi, and N.~Swamy.
\newblock \href {http://dx.doi.org/10.1007/978-3-030-17184-1\_2} {{Meta-F*}:
  Proof automation with {SMT}, tactics, and metaprograms}.
\newblock \iflongrefs{In \emph{28th European Symposium on Programming
  (ESOP)}}\else{\emph{ESOP}}\fi{}. \iflongrefs{april }\fi{}2019.

\bibitem[Matsuda et~al.(2023)Matsuda, Frohlich, Wang, and
  Wu]{embedding-by-unembedding}
K.~Matsuda, S.~Frohlich, M.~Wang, and N.~Wu.
\newblock \href {http://dx.doi.org/10.1145/3607830} {Embedding by unembedding}.
\newblock \emph{Proc. {ACM} Program. Lang.}, 7\penalty0 ({ICFP}):\penalty0
  1--47, 2023.

\bibitem[McBride(2010)]{mcbride-coincidences}
C.~McBride.
\newblock \href {http://dx.doi.org/10.1145/1863495.1863497} {Outrageous but
  meaningful coincidences: dependent type-safe syntax and evaluation}.
\newblock In B.~C. d.~S.~Oliveira and M.~Zalewski, editors, \emph{Proceedings
  of the {ACM} {SIGPLAN} Workshop on Generic Programming, {WGP} 2010,
  Baltimore, MD, USA, September 27-29, 2010}. 2010.

\bibitem[Meier et~al.(2025)Meier, Jensen, Pichon-Pharabod, and
  Spitters]{certicoq-wasm}
W.~Meier, M.~Jensen, J.~Pichon-Pharabod, and B.~Spitters.
\newblock \href {http://dx.doi.org/10.1145/3703595.3705879} {{CertiCoq-Wasm}: A
  verified {WebAssembly} backend for {CertiCoq}}.
\newblock In \emph{Proceedings of the 14th ACM SIGPLAN International Conference
  on Certified Programs and Proofs}. 2025.

\bibitem[Mullen et~al.(2018)Mullen, Pernsteiner, Wilcox, Tatlock, and
  Grossman]{oeuf}
E.~Mullen, S.~Pernsteiner, J.~R. Wilcox, Z.~Tatlock, and D.~Grossman.
\newblock \href {http://dx.doi.org/10.1145/3167089} {Œuf: minimizing the {Coq}
  extraction {TCB}}.
\newblock In \emph{Proceedings of the 7th ACM SIGPLAN International Conference
  on Certified Programs and Proofs}. 2018.

\bibitem[Myreen and Owens(2014)]{cakeml-pure-synthesis}
M.~O. Myreen and S.~Owens.
\newblock \href {http://dx.doi.org/10.1017/S0956796813000282} {Proof-producing
  translation of higher-order logic into pure and stateful {ML}}.
\newblock \emph{Journal of Functional Programming ({JFP})}, 24\penalty0
  (2-3):\penalty0 284--315, \iflongrefs{May }\fi{}2014.

\bibitem[Neis et~al.(2015)Neis, Hur, Kaiser, McLaughlin, Dreyer, and
  Vafeiadis]{NeisHKMDV15}
G.~Neis, C.~Hur, J.~Kaiser, C.~McLaughlin, D.~Dreyer, and V.~Vafeiadis.
\newblock \href {http://dx.doi.org/10.1145/2784731.2784764} {{Pilsner}: a
  compositionally verified compiler for a higher-order imperative language}.
\newblock \iflongrefs{In \emph{20th {ACM} {SIGPLAN} International Conference on
  Functional Programming}}\else{\emph{ICFP}}\fi{}, 2015.

\bibitem[New et~al.(2016)New, Bowman, and Ahmed]{NewBA16}
M.~S. New, W.~J. Bowman, and A.~Ahmed.
\newblock \href {http://dx.doi.org/10.1145/2951913.2951941} {Fully abstract
  compilation via universal embedding}.
\newblock \iflongrefs{In J.~Garrigue, G.~Keller, and E.~Sumii, editors,
  \emph{21st {ACM} {SIGPLAN} International Conference on Functional Programming
  ({ICFP})}}\else{\emph{ICFP}}\fi{}. 2016.

\bibitem[Nezamabadi et~al.(2026)Nezamabadi, Myreen, and
  Tan]{verified_dafny_extraction}
D.~Nezamabadi, M.~O. Myreen, and Y.~K. Tan.
\newblock \href {http://dx.doi.org/10.1145/3779031.3779092} {Verified {VCG} and
  verified compiler for {Dafny}}.
\newblock In \emph{Proceedings of the 15th ACM SIGPLAN International Conference
  on Certified Programs and Proofs}. 2026.

\bibitem[Patrignani et~al.(2019)Patrignani, Ahmed, and Clarke]{MarcosSurvey}
M.~Patrignani, A.~Ahmed, and D.~Clarke.
\newblock \href {http://dx.doi.org/10.1145/3280984} {Formal approaches to
  secure compilation: A survey of fully abstract compilation and related work}.
\newblock \emph{ACM Comput. Surv.}, 51\penalty0 (6), \iflongrefs{Feb.
  }\fi{}2019.

\bibitem[Pit{-}Claudel et~al.(2022)Pit{-}Claudel, Philipoom, Jamner, Erbsen,
  and Chlipala]{rupicola}
C.~Pit{-}Claudel, J.~Philipoom, D.~Jamner, A.~Erbsen, and A.~Chlipala.
\newblock \href {http://dx.doi.org/10.1145/3519939.3523706} {Relational
  compilation for performance-critical applications: extensible proof-producing
  translation of functional models into low-level code}.
\newblock In R.~Jhala and I.~Dillig, editors, \emph{{PLDI} '22: 43rd {ACM}
  {SIGPLAN} International Conference on Programming Language Design and
  Implementation, San Diego, CA, USA, June 13 - 17, 2022}. 2022.

\bibitem[Poiret et~al.(2026)Poiret, Maillard, and Tabareau]{poiret2026divide}
J.~Poiret, K.~Maillard, and N.~Tabareau.
\newblock \href {https://nantes-universite.hal.science/hal-05495420} {{Divide
  and Check: Logical Relations, No Algorithms Attached}}.
\newblock working paper or preprint, 2026.

\bibitem[Prinz et~al.(2022)Prinz, Kavvos, and
  Lampropoulos]{deeper-shallow-embeddings}
J.~Prinz, G.~A. Kavvos, and L.~Lampropoulos.
\newblock \href {http://dx.doi.org/10.4230/LIPICS.ITP.2022.28} {Deeper shallow
  embeddings}.
\newblock In J.~Andronick and L.~de~Moura, editors, \emph{13th International
  Conference on Interactive Theorem Proving, {ITP} 2022, Haifa, Israel, August
  7-10, 2022}. 2022.

\bibitem[Protzenko et~al.(2017)Protzenko, Zinzindohou\'e, Rastogi,
  Ramananandro, Wang, {Zanella-B\'eguelin}, Delignat-Lavaud, Hri\cb{t}cu,
  Bhargavan, Fournet, and Swamy]{lowstar}
J.~Protzenko, J.-K. Zinzindohou\'e, A.~Rastogi, T.~Ramananandro, P.~Wang,
  S.~{Zanella-B\'eguelin}, A.~Delignat-Lavaud, C.~Hri\cb{t}cu, K.~Bhargavan,
  C.~Fournet, and N.~Swamy.
\newblock \href {http://dx.doi.org/10.1145/3110261} {Verified low-level
  programming embedded in {F*}}.
\newblock \emph{{PACMPL}}, 1\penalty0 ({ICFP}):\penalty0 17:1--17:29,
  \iflongrefs{Sept. }\fi{}2017.

\bibitem[Protzenko et~al.(2020)Protzenko, Parno, Fromherz, Hawblitzel,
  Polubelova, Bhargavan, Beurdouche, Choi, Delignat{-}Lavaud, Fournet,
  Kulatova, Ramananandro, Rastogi, Swamy, Wintersteiger, and
  B{\'{e}}guelin]{evercrypt}
J.~Protzenko, B.~Parno, A.~Fromherz, C.~Hawblitzel, M.~Polubelova,
  K.~Bhargavan, B.~Beurdouche, J.~Choi, A.~Delignat{-}Lavaud, C.~Fournet,
  N.~Kulatova, T.~Ramananandro, A.~Rastogi, N.~Swamy, C.~M. Wintersteiger, and
  S.~Z. B{\'{e}}guelin.
\newblock \href {http://dx.doi.org/10.1109/SP40000.2020.00114} {{EverCrypt}:
  {A} fast, verified, cross-platform cryptographic provider}.
\newblock \iflongrefs{In \emph{2020 {IEEE} Symposium on Security and Privacy
  ({SP})}}\else{\emph{IEEE S{\&}P}}\fi{}. 2020.

\bibitem[Rastogi et~al.(2021)Rastogi, Mart\'inez, Fromherz, Ramananandro, and
  Swamy]{indexedeffects}
A.~Rastogi, G.~Mart\'inez, A.~Fromherz, T.~Ramananandro, and N.~Swamy.
\newblock \href {https://www.fstar-lang.org/papers/indexedeffects/}
  {Programming and proving with indexed effects}, \iflongrefs{July }\fi{}2021.

\bibitem[Sozeau et~al.(2025)Sozeau, Forster, Lennon-Bertrand, Nielsen,
  Tabareau, and Winterhalter]{metacoq2}
M.~Sozeau, Y.~Forster, M.~Lennon-Bertrand, J.~Nielsen, N.~Tabareau, and
  T.~Winterhalter.
\newblock \href {http://dx.doi.org/10.1145/3706056} {Correct and complete type
  checking and certified erasure for {Coq}, in {Coq}}.
\newblock \emph{J. ACM}, 72\penalty0 (1), \iflongrefs{Jan. }\fi{}2025.

\bibitem[Swamy(2026)]{VibeFStar}
N.~Swamy.
\newblock \href {https://risemsr.github.io/blog/2026-02-04-nik-agentic-pop/}
  {Agentic proof-oriented programming}.
\newblock RiSE MSR Blog, \iflongrefs{Feb. }\fi{}2026.

\bibitem[Swamy et~al.(2016)Swamy, Hri\cb{t}cu, Keller, Rastogi,
  Delignat-Lavaud, Forest, Bhargavan, Fournet, Strub, Kohlweiss,
  Zinzindohou\'e, and {Zanella-B\'eguelin}]{mumon}
N.~Swamy, C.~Hri\cb{t}cu, C.~Keller, A.~Rastogi, A.~Delignat-Lavaud, S.~Forest,
  K.~Bhargavan, C.~Fournet, P.-Y. Strub, M.~Kohlweiss, J.-K. Zinzindohou\'e,
  and S.~{Zanella-B\'eguelin}.
\newblock \href {http://dx.doi.org/10.1145/2837614.2837655} {Dependent types
  and multi-monadic effects in {F*}}.
\newblock \iflongrefs{In \emph{43rd ACM SIGPLAN-SIGACT Symposium on Principles
  of Programming Languages (POPL)}}\else{\emph{POPL}}\fi{}. \iflongrefs{Jan.
  }\fi{}2016.

\end{thebibliography}

\end{document}
\endinput